\newif\ifFull
\Fulltrue
\newif\ifSyntaxChecking

\ifFull
\documentclass[a4paper,11pt,twoside]{article}

\ifFull
\usepackage[margin=1in]{geometry}
\fi

\usepackage{fancyhdr}
\pagestyle{fancy}
\fancyhf{}
\fancyhead[RO]{B. Klemz, G. Rote: Maximum-Weight Induced Matchings  in Convex Bipartite Graphs\hfill \thepage}
\fancyhead[LE]{\thepage \hfill B. Klemz, G. Rote: Maximum-Weight Induced Matchings  in Convex Bipartite Graphs}

\else
\documentclass[runningheads,orivec]{llncs}
\fi

\usepackage[labelfont=bf]{caption}

\usepackage[utf8]{inputenc}
\usepackage{subcaption}
\usepackage{wrapfig}

\usepackage{xcolor}

\usepackage{xspace}
\usepackage{enumerate}

\usepackage{microtype}

\usepackage{amsmath}
\ifFull
\usepackage{amsthm}
\fi
\usepackage{amssymb}
\usepackage{amstext}
\usepackage{esvect}
\usepackage{amsopn}

\usepackage{hyperref}
\usepackage{graphicx}
\usepackage[ruled,vlined]{algorithm2e}
\DontPrintSemicolon
\SetCommentSty{textrm}
\SetKwComment{tcp}{$\triangleright$ }{}
\SetKwFor{ForAll}{for all}{do}{}
\usepackage[basic]{complexity}

\usepackage{paralist}
\usepackage{enumitem}

\ifSyntaxChecking
\input prepareForSyntaxChecking

\else
\usepackage[left, pagewise, displaymath, mathlines]{lineno}
\newcommand*\patchAmsMathEnvironmentForLineno[1]{%
  \expandafter\let\csname old#1\expandafter\endcsname\csname #1\endcsname
  \expandafter\let\csname oldend#1\expandafter\endcsname\csname end#1\endcsname
  \renewenvironment{#1}%
     {\linenomath\csname old#1\endcsname}%
     {\csname oldend#1\endcsname\endlinenomath}}%
\newcommand*\patchBothAmsMathEnvironmentsForLineno[1]{%
  \patchAmsMathEnvironmentForLineno{#1}%
  \patchAmsMathEnvironmentForLineno{#1*}}%
\AtBeginDocument{%
\patchBothAmsMathEnvironmentsForLineno{equation}%
\patchBothAmsMathEnvironmentsForLineno{align}%
\patchBothAmsMathEnvironmentsForLineno{flalign}%
\patchBothAmsMathEnvironmentsForLineno{alignat}%
\patchBothAmsMathEnvironmentsForLineno{gather}%
\patchBothAmsMathEnvironmentsForLineno{multline}%
}
\fi

%
\usepackage[textsize=footnotesize]{todonotes}
\SetVlineSkip{0pt}  
\def\comment#1{}%
\def\withcomments{%
  \newcounter{mycommentcounter}%
   \def\comment##1{\refstepcounter{mycommentcounter}%
    \ifhmode%
     \unskip%
     {\dimen1=\baselineskip \divide\dimen1 by 2 %
       \raise\dimen1\llap{\tiny
	{-\themycommentcounter-}}}\fi%
     \marginpar[{\renewcommand{\baselinestretch}{0.8}%
       \hspace*{-2em}\begin{minipage}{1.5\marginparwidth}\footnotesize%
[\themycommentcounter]:%
\raggedright ##1\end{minipage}}]{\renewcommand{\baselinestretch}{0.8}%
       \begin{minipage}{1.5\marginparwidth}\footnotesize%
[\themycommentcounter]: \raggedright%
##1\end{minipage}}}%
  }

\newcommand{\ldt}{\mathrel{.\,.}}

\ifFull

\else

\fi

\makeatletter
 \makeatother

\ifFull
\else
\spnewtheorem*{prprty}{Property}{\bfseries}{\itshape}
\fi

\ifFull
\newtheorem{theorem}{Theorem}

\newtheorem{observation}{Observation}

\newtheorem{lemma}{Lemma}

\theoremstyle{remark}

\else
\let\doendproof\endproof
\renewcommand\endproof{~\hfill$\qed$\doendproof}
\fi

\title{%
  \hskip -1em
  Linear-Time Algorithms for Maximum-Weight Induced
  Matchings
\hskip -1em\null\\
  and Minimum Chain Covers in Convex Bipartite Graphs}
 
\ifFull
\author{
  Boris Klemz\footnote{Institut f\"ur Informatik, Freie Universit\"at Berlin, Germany}
  \and
  Günter Rote\footnotemark[1]
} 
\date{November 13, 2017}
\else 
\author{Boris Klemz\inst{1} \and Günter Rote\inst{1}}  
\institute{Institute of Computer Science, Freie Universit\"at Berlin, Germany
\fi

\begin{document}

\maketitle
\thispagestyle{empty}

\begin{abstract}
A bipartite graph $G=(U,V,E)$ is \emph{convex} if the vertices in $V$ can be linearly
ordered such that for each vertex $u\in U$, the neighbors of $u$ are
consecutive in the ordering of $V$. An \emph{induced matching} $H$ of~$G$ is a matching such that
no edge of $E$ connects endpoints of two different edges of $H$.

We show that in a convex bipartite graph with $n$ vertices and $m$ \emph{weighted} edges,
an induced matching of maximum total weight can be computed in $O(n+m)$ time.

An \emph{unweighted} convex bipartite graph has a representation of
size $O(n)$ that records for each vertex $u\in U$ the first and last neighbor
in the ordering of $V$.
Given such a \emph{compact representation}, we compute an induced matching of maximum cardinality in
$O(n)$ time.

In convex bipartite graphs, maximum-cardinality induced matchings are dual to minimum \emph{chain covers}.
A chain cover is a covering of the edge set by \emph{chain subgraphs}, that is, subgraphs that do not contain induced matchings of more than one edge.
Given a compact representation, we compute a representation of a minimum chain cover in $O(n)$ time.
If no compact representation is given, the cover can be computed in $O(n+m)$ time.

All of our algorithms achieve optimal running time for the respective
problem and model.
Previous algorithms considered only the unweighted case, and
the best algorithm for computing a maximum-cardinality induced matching or a minimum chain cover
in a convex bipartite graph
had a running time of $O(n^2)$.
\end{abstract} 

\section{Introduction}
\label{sec:intro}

\paragraph{Problem Statement.}

A bipartite graph $G=(U,V,E)$ is \emph{convex} if
$V$ can be numbered as
$\{1,2,\ldots,n_V\}$ so that the neighbors of
every vertex $i\in U$ form an \emph{interval}
$\{L^i,L^i+1,\ldots,R^i\}$, see Figure~\ref{fig:example-convex}(a).
For such graphs, we consider the problem of computing an induced
  matching (a)~of maximum cardinality or (b)~of maximum total weight,
for graphs with edge weights.

An \emph{induced matching} $H\subseteq E$ is a matching that results
as a subgraph induced by some subset of vertices.
This amounts to requiring
that
no edge of $E$ connects endpoints of two different edges of $H$, see Figure~\ref{fig:example-convex}(a).
In terms of the line graph, an induced matching is an independent set
in the \emph{square} of the line graph.
The square of a graph connects every pair of nodes whose distance
is one or two.
Accordingly, we call two edges of $E$ \emph{independent} if they can appear
together in an induced matching, or in other words, if their endpoints induce
a $2K_2$ (a~disjoint union of two edges) in $G$.
Otherwise, they are called \emph{dependent}.
  \begin{figure}[tb]
    \centering

\includegraphics[width=\columnwidth]{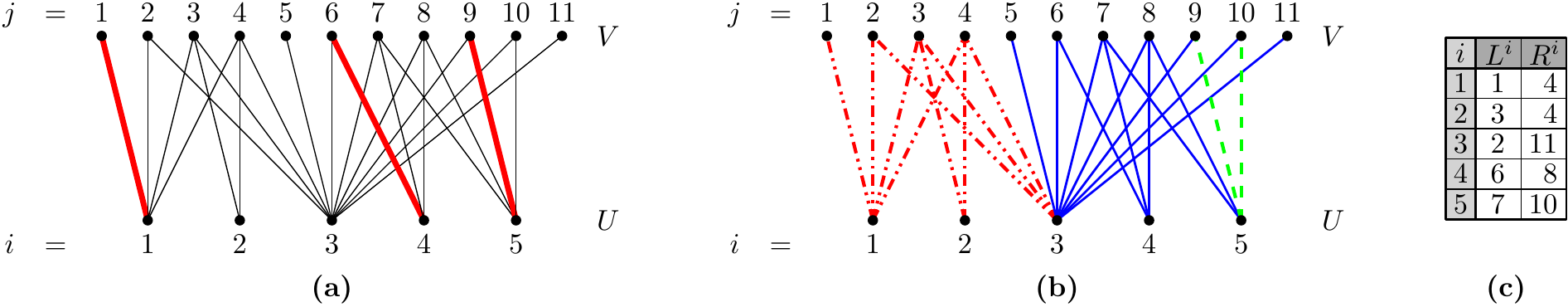}
   
\caption{\textbf{(a)} A convex bipartite graph~$G=(U,V,E)$ containing an
  induced matching $H$ of size~3.
Since we use successive natural numbers as elements of $U$ and $V$, we
will explicitly indicate whether we regard a number $x$ as a vertex of $U$ or
of $V$.
  There is no induced matching with more than 3 edges:
  vertex~$3\in U$ is adjacent to all vertices of~$V$ except~$1\in
  V$. Thus, if we match $3\in U$, this can only lead to induced matchings of size
  at most~$2$. Furthermore, we cannot simultaneously match $1\in U$ and
  $2\in U$ since every neighbor of
$2\in U$ is also adjacent to~$1\in U$.
  \textbf{(b)} A minimum chain cover of~$G$ with~$3$ chain
  subgraphs~$Z_1,Z_2,Z_3$ (in different colors and dash styles), providing an independent proof that $H$ is optimal. Here, $Z_1,Z_2,Z_3$ have disjoint edge
  sets, which is not necessarily the case in general. \textbf{(c)}~The compact representation of~$G$.
}
    \label{fig:example-convex}
  \end{figure}

In convex bipartite graphs, maximum-cardinality induced matchings are dual to minimum chain covers.
A \emph{chain graph} $Z$ is a bipartite graph
that contains no induced matching of more than one edge, i.\,e., it contains no pair of independent edges.
(Chain graphs are also called \emph{difference graphs}~\cite{DBLP:journals/dam/HammerPS90} or \emph{non-separable} bipartite graphs~\cite{DBLP:journals/jgt/Ding91}.)
A \emph{chain cover} of a graph~$G$ with edge set $E$ is a set of
chain subgraphs~$Z_1,Z_2,\ldots,Z_W$ of~$G$ such that the union of the
edge sets of~$Z_1,Z_2,\ldots,Z_W$ is~$E$, see
Figure~\ref{fig:example-convex}(b).
A chain cover with $W$ chain subgraphs provides an obvious {certificate}
that the graph
cannot
contain an induced matching with more than $W$ edges. We will elaborate on this
aspect
of a chain cover as a certificate of optimality in Section~\ref {certificate}.
A \emph{minimum} chain cover of~$G$ is a chain cover with a smallest possible number of chain subgraphs.
In a convex bipartite graph~$G$, the maximum size of an induced
matching is equal to the minimum number of chain subgraphs of a chain cover~\cite{DBLP:journals/tcs/YuCM98}.

We denote the number of vertices by
$n_U=|U|$,
$n_V=|V|$,
$n=n_U+n_V$,
and the number of edges by
$m=|E|$.
If a convex graph is given as an ordinary bipartite graph without the proper
numbering of $V$, it can be transformed into this form in linear time
$O(n+m)$~\cite{DBLP:journals/jcss/BoothL76}. (In terms of the
bipartite adjacency matrix, convexity is the well-known consecutive-ones property.)
Unweighted convex bipartite graphs have a natural implicit
representation~\cite{spinrad2003efficient} of size $O(n)$, which is
often called a \emph{compact representation}~\cite{DBLP:journals/mst/Hung12,DBLP:journals/algorithmica/SoaresS09}: every interval $\{L^i,L^i+1,\ldots,R^i\}$ is given by its endpoints $L^i$ and $R^i$, see Figure~\ref{fig:example-convex}(c).
Since the numbering of~$V$ can be computed in~$O(n+m)$
time, it is easy to obtain a compact representation in total time~$O(n+m)$~\cite{DBLP:journals/algorithmica/SoaresS09,STEINER199691}.
The chain covers that we construct will consist of
convex bipartite subgraphs with the same ordering of $V$ as the
original graph.
Thus, we will be able to use the same representation for the chain
graphs of a chain cover.

\paragraph{Related Work and Motivation.}

The problem of finding an induced matching of maximum size was first considered by Stockmeyer and Vazirani~\cite{DBLP:journals/ipl/StockmeyerV82} as the ``risk-free marriage problem'' with applications in interference-free network communication.
The decision version of the problem is known to be $\NP$-complete in many restricted graph classes~\cite{DBLP:journals/dam/Cameron89,DBLP:journals/ipl/Lozin02,DBLP:journals/algorithmica/KoblerR03}, in particular bipartite graphs~\cite{DBLP:journals/dam/Cameron89,DBLP:journals/ipl/Lozin02} that are $C_4$-free~\cite{DBLP:journals/ipl/Lozin02} or have maximum degree~$3$~\cite{DBLP:journals/ipl/Lozin02}.
On the other hand, it can be solved in polynomial time in chordal graphs~\cite{DBLP:journals/dam/Cameron89}, weakly chordal graphs~\cite{DBLP:journals/dm/CameronST03}, trapezoid graphs, $k$-interval-dimension graphs and co-comparability graphs~\cite{DBLP:journals/dam/GolumbicL00}, amongst others.
For a more exhaustive survey we refer to~\cite{DBLP:journals/jda/DuckworthMZ05}.

The class of convex bipartite graphs was introduced by Fred
Glover~\cite{glover1967maximum}, who motivates the computation of
matchings in these graphs with industrial manufacturing
applications. Items that can be matched when some quantity fits up to
a certain tolerance naturally lead to convex bipartite graphs.
The computation of matchings in convex bipartite graphs also corresponds to a scheduling problem of tasks of discrete length on a single disjunctive resource~\cite{DBLP:journals/informs/Katriel08}.
The problem of finding a (classic, not induced) matching of maximum cardinality in convex bipartite graphs has been studied extensively~\cite{glover1967maximum,STEINER199691,GALLO198431} culminating in an $O(n)$ algorithm when a compact representation of the graph is given~\cite{STEINER199691}.
Several other combinatorial problems have been studied in convex
bipartite graphs. While some problems have been shown to be
$\NP$-complete even if restricted to this graph
class~\cite{ASDRE2007248}, many problems that are $\NP$-hard in
general can be solved efficiently in convex bipartite graphs. For example, a maximum independent set can be found in~$O(n)$ time (assuming a compact representation)~\cite{DBLP:journals/algorithmica/SoaresS09} and the existence of Hamiltonian cycles can be decided in~$O(n^2)$ time~\cite{DBLP:journals/dm/Muller96a}. For a comprehensive summary we refer to~\cite{DBLP:journals/mst/Hung12}.

One of the applications given by Stockmeyer and
Vazirani~\cite{DBLP:journals/ipl/StockmeyerV82} for the induced
matching problem can be stated as follows. We want to test (or use) a
maximum number of connections between receiver-sender pairs in a
network. However, testing a particular connection produces noise so
that no other node in reach may be tested simultaneously. We remark
that this type of motivation extends very naturally to convex
bipartite graphs when we consider \emph{wireless} networks in which
nodes broadcast or receive messages in specific frequency ranges.
Further, \emph{weighted} edges can model the importance of connections.

\paragraph{Previous Work.}

Yu, Chen and Ma~\cite{DBLP:journals/tcs/YuCM98} describe an
algorithm that finds both a maximum-cardinality induced matching and a minimum chain cover in a convex bipartite graph in runtime~$O(m^2)$.
Their procedure is improved by Brandstädt, Eschen and Sritharan~\cite{DBLP:journals/tcs/BrandstadtES07}, resulting in a runtime of~$O(n^2)$.
Chang~\cite{DBLP:journals/dam/Chang03} computes maximum-cardinality induced matchings and minimum chain covers in~$O(n+m)$ time in bipartite permutation graphs, which form a proper subclass of convex bipartite graphs.
Recently, Pandey, Panda, Dane and
Kashyap~\cite{DBLP:conf/caldam/PandeyPDK17} gave polynomial algorithms
for finding a maximum-cardinality induced matching in circular-convex
and triad-convex bipartite graphs. These graph classes generalize
convex bipartite graphs.

\paragraph{Our Contribution.}

We improve the previous best~$O(n^2)$ algorithm \cite{DBLP:journals/tcs/BrandstadtES07} for maximum-cardinality induced matching and minimum chain covers in convex bipartite graphs in several ways.
In Section~\ref{sec:weighted} we give an algorithm for finding maximum-weight induced matchings in convex bipartite graphs with~$O(n+m)$ runtime.
The weighted problem has not been considered before.
In Section~\ref{sec:unweighted} we specialize our algorithm to find induced matchings of maximum cardinality in~$O(n)$ runtime, given a compact representation of the graph.
In Section~\ref{sec:chain} we extend this approach to obtain in~$O(n)$ time a compact representation of a minimum chain cover.
If no compact representation is given, our approach is easily adapted to produce a minimum chain cover in~$O(n+m)$ time.

All of our algorithms achieve optimal running time for the respective
problem and model. Our results for finding a maximum-cardinality induced
matching also improve the running times of the algorithms of Pandey
et al.~\cite{DBLP:conf/caldam/PandeyPDK17} for the circular-convex and
triad-convex case, as
they use the convex case as a
building block.

\section{Maximum-Weight Induced Matchings}
\label{sec:weighted}

In this section, we compute a maximum-weight induced matching of a given edge-weighted convex bipartite graph~$G=(U,V,E)$ in time~$O(n+m)$.
We generally write indices $i\in U$ as superscripts and indices $j\in V$ as
subscripts.
We consider $E$ as a subset of $U\times V$.
We assume that $V=\{1,\ldots,n_U\}$ is numbered as described in Section~\ref{sec:intro}
and the interval $\{L^i,L^i+1,\ldots,R^i\}\subseteq V$ of each
vertex~$i\in U$ is given
by the pair ($L^i$,$R^i$) of the left and right endpoint.
Each edge~$(i,j)\in E
$ has a weight~$C^i_j$.

Our dynamic-programming approach considers the following subproblems:
For an edge $(i,j)\in E$, we define
$W^i_j$ as the cost of the maximum-weight induced matching
that uses the edge $(i,j)$ and contains only edges in
$U\times \{1,\ldots,j\}$.
The following dynamic-programming recursion computes~$W^i_j$:
\begin{equation}
  \label{eq:recursion}
  W^{i}_{j}
 = C^{i}_{j}
 + \max \{\,W^{i'}_{j'}\mid
R^{i'}<j,\ j'<L^{i}
\,\}\cup\{0\}
\end{equation}
The range over which the maximum is taken is illustrated in Figure~\ref{fig:recursion}.
\begin{figure}[tb]
  \centering
  \includegraphics{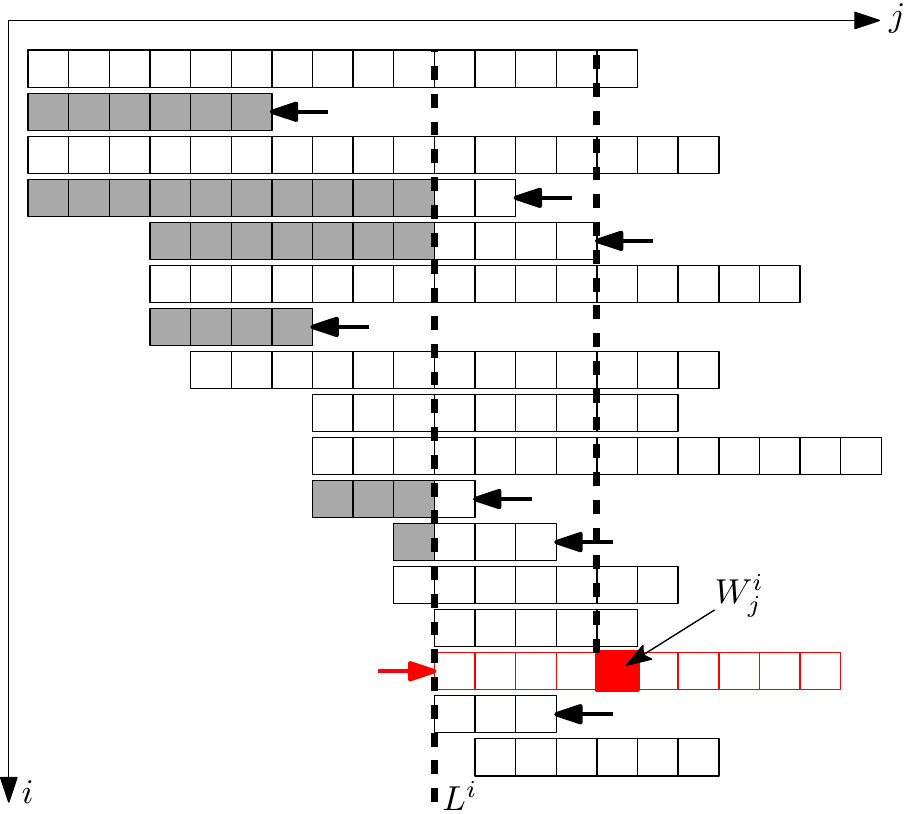}
  \caption{The table entries that go into the computation of $W^i_j$
 are shaded: They lie in rows that end to the left of
$W^i_j$ (marked by an arrow), and only the entries to the left of $L^i$ are considered.}
  \label{fig:recursion}
\end{figure}
In this recursion, we build the induced matching $H$ of weight
$W^{i}_{j}$ by adding the edge $(i,j)$ to some induced matching~$H'$ of weight
$W^{i'}_{j'}$.
We want $H$ to be an induced matching:
By construction, the edge $(i',j')$ is independent of
$(i,j)$, but we have to show that the other edges of $H'$ are also
independent of $(i,j)$.
In order to prove this
(Lemma~\ref{thm:RecursionCorrect}),
we use a transitivity relation
between independent edge pairs.

\newcommand{\theoremInducedChar}{
Two edges $(i,j)$ and $(i',j')$ are independent if and only if $j'\notin \lbrack L^i,R^i\rbrack$ and $j\notin \lbrack L^{i'},R^{i'}\rbrack$.
}
\begin{observation}
\label{thm:InducedChar}
\theoremInducedChar
\end{observation}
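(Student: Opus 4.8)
The plan is simply to unwind the definitions. Recall that, by definition, $(i,j)$ and $(i',j')$ are independent precisely when their endpoints $\{i,i',j,j'\}$ induce a $2K_2$ in $G$, equivalently when $\{(i,j),(i',j')\}$ is itself an induced matching. (If one prefers the ``appear together'' formulation, note that any two edges lying in a common induced matching $H$ already form an induced matching on their own, since passing to the induced subgraph on their four endpoints can only remove edges, never add them.) So it suffices to characterize when the two-edge set $\{(i,j),(i',j')\}$, with $(i,j),(i',j')\in E$, is an induced matching.

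By the definition of an induced matching, $\{(i,j),(i',j')\}$ is an induced matching exactly when (i) it is a matching, i.e.\ $i\neq i'$ and $j\neq j'$, and (ii) no edge of $E$ joins endpoints of the two different edges, i.e.\ neither of the ``crossing'' pairs $(i,j')$ and $(i',j)$ lies in $E$. Convexity translates condition (ii) directly: the neighborhood of $i$ in $V$ is the interval $\lbrack L^i,R^i\rbrack$, so $(i,j')\in E$ if and only if $j'\in\lbrack L^i,R^i\rbrack$, and likewise $(i',j)\in E$ if and only if $j\in\lbrack L^{i'},R^{i'}\rbrack$. Hence (ii) is precisely the asserted condition ``$j'\notin\lbrack L^i,R^i\rbrack$ and $j\notin\lbrack L^{i'},R^{i'}\rbrack$''.

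It remains only to see that condition (i) is redundant given condition (ii) and the standing assumption $(i,j),(i',j')\in E$, so that it may be omitted from the statement. Indeed, if $i=i'$ then $\lbrack L^i,R^i\rbrack=\lbrack L^{i'},R^{i'}\rbrack$, and $(i,j)\in E$ forces $j\in\lbrack L^{i'},R^{i'}\rbrack$, contradicting (ii); thus $i\neq i'$. Similarly, $j=j'$ would give $j'\in\lbrack L^i,R^i\rbrack$ from $(i,j)\in E$, again contradicting (ii); thus $j\neq j'$. Combining the two paragraphs yields the equivalence. There is no genuinely hard step here; the only point requiring a little care is exactly this last one — checking that the matching condition follows from the interval condition rather than silently assuming it.
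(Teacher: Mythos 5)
Your proof is correct. The paper states this as an Observation with no proof at all, treating it as immediate from the definition of independence (the four endpoints induce a $2K_2$) together with convexity; your definition-unwinding argument is exactly the intended justification, and your extra care in checking that the degenerate cases $i=i'$ and $j=j'$ are already excluded by the interval conditions is a small but genuine point that the paper leaves implicit.
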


\newcommand{\theoremInducedTransitiv}{
  Let $(i'',j''),(i',j'),(i,j)\in E$ with $j''<j'<j$.
  Assume that $(i'',j'')$
and $(i',j')$ are independent, and $(i',j')$ and $(i,j)$ are
independent. Then $(i'',j'')$ and $(i,j)$ are independent.
}
\begin{lemma}
\label{thm:InducedTransitiv}
\theoremInducedTransitiv
\end{lemma}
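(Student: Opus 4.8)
The plan is to reduce the whole statement to the interval characterization of independence furnished by Observation~\ref{thm:InducedChar}, and then argue purely about the relative order of the three numbers $j''<j'<j$ and the interval endpoints. First I would spell out the hypotheses and the goal in that language. By Observation~\ref{thm:InducedChar}, ``$(i'',j'')$ and $(i',j')$ are independent'' means $j'\notin[L^{i''},R^{i''}]$ and $j''\notin[L^{i'},R^{i'}]$, and ``$(i',j')$ and $(i,j)$ are independent'' means $j\notin[L^{i'},R^{i'}]$ and $j'\notin[L^{i},R^{i}]$. The conclusion to be shown, again via Observation~\ref{thm:InducedChar}, is $j\notin[L^{i''},R^{i''}]$ and $j''\notin[L^{i},R^{i}]$.

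Next I would establish these two non-membership statements separately, each time invoking the trivial but essential fact that for an edge $(a,b)\in E$ one has $b\in[L^a,R^a]$. For the first claim: from $j'\notin[L^{i''},R^{i''}]$ we get $j'<L^{i''}$ or $j'>R^{i''}$; but $(i'',j'')\in E$ gives $L^{i''}\le j''<j'$, so the case $j'<L^{i''}$ is impossible, leaving $j'>R^{i''}$, and then $j>j'>R^{i''}$ yields $j\notin[L^{i''},R^{i''}]$. Symmetrically, for the second claim: from $j'\notin[L^{i},R^{i}]$ we get $j'<L^{i}$ or $j'>R^{i}$; but $(i,j)\in E$ gives $j'<j\le R^{i}$, so the case $j'>R^{i}$ is impossible, leaving $j'<L^{i}$, and then $j''<j'<L^{i}$ yields $j''\notin[L^{i},R^{i}]$.

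I expect this to be short and essentially obstacle-free; the only point that needs care — the place where the ordering hypothesis $j''<j'<j$ is actually used — is precisely in ruling out the ``wrong'' side of each disjunction by comparing $j'$ with $j''$ (resp.\ with $j$) via the membership $j''\in[L^{i''},R^{i''}]$ (resp.\ $j\in[L^{i},R^{i}]$). It is worth noting in passing that the argument uses only the clause $j'\notin[L^{i''},R^{i''}]$ of the first hypothesis and the clause $j'\notin[L^{i},R^{i}]$ of the second; the clauses involving the interval $[L^{i'},R^{i'}]$ of the middle edge are not needed. Since the statement plainly fails without $j''<j'<j$, making this dependence explicit in the write-up is the main thing to get right.
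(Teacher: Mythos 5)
Your proof is correct and follows essentially the same route as the paper: apply Observation~\ref{thm:InducedChar}, use edge membership ($j''\ge L^{i''}$, $j\le R^{i}$) together with $j''<j'<j$ to force $R^{i''}<j'$ and $j'<L^{i}$, and conclude $j>R^{i''}$ and $j''<L^{i}$. Your side remark that the clauses involving $[L^{i'},R^{i'}]$ are not needed is accurate but does not change the argument, which the paper states more tersely as a single chain of inequalities.
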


\begin{proof}
By Observation~\ref{thm:InducedChar}, we have $j''\le R^{i''}<j'\le R^{i'}<j$ and $j''<L^{i'}\le j'<L^i\le j$. Thus, $j\notin \lbrack L^{i''},R^{i''}\rbrack$ and $j''\notin \lbrack L^i,R^i\rbrack$.
\end{proof}

\newcommand{\theoremRecursionCorrect}{
The recursion~\eqref{eq:recursion} is correct.
}
\begin{lemma}
\label{thm:RecursionCorrect}
\theoremRecursionCorrect
\end{lemma}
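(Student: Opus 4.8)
The plan is to prove the recursion by establishing that $W^i_j$ is both $\ge$ and $\le$ the right-hand side of~\eqref{eq:recursion}, where the maximum there is understood to range over edges $(i',j')\in E$ (since $W^{i'}_{j'}$ is only defined for edges).

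\textbf{Lower bound.} Fix an edge $(i',j')\in E$ with $R^{i'}<j$ and $j'<L^i$ that attains the maximum, and let $H'$ be an induced matching of weight $W^{i'}_{j'}$ that uses $(i',j')$ and is contained in $U\times\{1,\dots,j'\}$. I would take $H:=H'\cup\{(i,j)\}$; it lies in $U\times\{1,\dots,j\}$ because $j'<L^i\le j$, and I claim it is an induced matching of weight $C^i_j+W^{i'}_{j'}$. By Observation~\ref{thm:InducedChar}, the hypotheses $R^{i'}<j$ and $j'<L^i$ make $(i,j)$ and $(i',j')$ independent, hence vertex-disjoint; every other edge of $H'$ has its $V$-endpoint strictly below $j'<j$, so $(i,j)$ is vertex-disjoint from all of $H'$ and $H$ is a matching. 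For the ``induced'' part, take any $(i'',j'')\in H'$ with $(i'',j'')\ne(i',j')$: then $j''<j'$, the pair $(i'',j''),(i',j')$ is independent (as $H'$ is induced), and $(i',j'),(i,j)$ is independent, so Lemma~\ref{thm:InducedTransitiv} applied to $j''<j'<j$ shows $(i'',j'')$ and $(i,j)$ are independent. If no admissible $(i',j')$ exists, the single-edge matching $\{(i,j)\}$ already gives $W^i_j\ge C^i_j$, matching the ``$\cup\{0\}$'' term. In all cases $W^i_j$ is at least the right-hand side.

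\textbf{Upper bound.} Let $H$ realize $W^i_j$, i.e.\ an induced matching using $(i,j)$ and contained in $U\times\{1,\dots,j\}$. If $H=\{(i,j)\}$ then $W^i_j=C^i_j$, covered by the $0$ term. Otherwise let $(i',j')$ be the edge of $H':=H\setminus\{(i,j)\}$ with the largest $V$-endpoint. Since $H$ is a matching using column $j$, every edge of $H'$ uses a column in $\{1,\dots,j-1\}$, so $H'$ is an induced matching using $(i',j')$ and contained in $U\times\{1,\dots,j'\}$; hence its weight is at most $W^{i'}_{j'}$. It remains to check that $(i',j')$ is admissible. As $(i,j)$ and $(i',j')$ are independent, Observation~\ref{thm:InducedChar} gives $j'\notin[L^i,R^i]$ and $j\notin[L^{i'},R^{i'}]$; combining $j'<j$ with $j\le R^i$ (as $(i,j)\in E$) forces $j'<L^i$, and combining $j'<j$ with $L^{i'}\le j'$ (as $(i',j')\in E$) forces $R^{i'}<j$. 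Thus $W^{i'}_{j'}$ appears in the maximum and $W^i_j=C^i_j+\operatorname{weight}(H')\le C^i_j+W^{i'}_{j'}$, so $W^i_j$ is at most the right-hand side.

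The step I expect to be the crux is showing that $H=H'\cup\{(i,j)\}$ is \emph{induced} in the lower-bound direction: independence of $(i,j)$ from the branching edge $(i',j')$ is built in, but independence of $(i,j)$ from \emph{every} edge of $H'$ is not obvious, and this is exactly where Lemma~\ref{thm:InducedTransitiv} is needed, exploiting that the relevant column indices are linearly ordered as $j''<j'<j$. The index bookkeeping in the upper-bound direction (picking the edge of $H'$ with the largest column and extracting $R^{i'}<j$, $j'<L^i$ from Observation~\ref{thm:InducedChar}) is routine by comparison.
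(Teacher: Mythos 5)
Your proof is correct and follows essentially the same route as the paper: the lower-bound direction hinges on Lemma~\ref{thm:InducedTransitiv} to show every edge contributing to $W^{i'}_{j'}$ is independent of $(i,j)$, and the upper-bound direction uses Observation~\ref{thm:InducedChar} to show the conditions $R^{i'}<j$, $j'<L^i$ capture exactly the admissible branching edges. You simply spell out both inequalities in more detail than the paper's two-sentence argument, which is fine.
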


\begin{proof}
By Observation~\ref{thm:InducedChar}, any edge $(i',j')$ with $j'<j$
that is independent of~$(i,j)$ satisfies $R^{i'}<j$ and $j'<L^i$. By
Lemma~\ref{thm:InducedTransitiv}, all other edges~$(i'',j'')$ used to
obtain the matching value~$W^{i'}_{j'}$ are also independent of~$(i,j)$.
\end{proof}

We create a table in which we record the entries~$W^i_j$.
We assume that the intervals are sorted in nondecreasing order
by~$L^i$, that is,
$L^i\le L^h$ for $i<h$.
The values $W^i_{L^i},\ldots,W^i_{R^i}$ form the $i$-th \emph{row} of the table.
We fill the table row by row proceeding from~$i=1$ to~$i=n_U$.
Each row~$i$ is processed from left to right.

The only challenge in evaluating
\eqref{eq:recursion} is the
maximum-expression, for which
we introduce the notation
$M^{i}_{j}$.
\begin{equation*}
  M^{i}_{j} = \max \{\,W^{i'}_{j'}\mid
R^{i'}<j,\ j'<L^{i}
\,\}\cup\{0\}
\end{equation*}

We discuss the computation of the leftmost entry
$W^i_{L^i}$
later.
When we proceed from~$W^i_j$ to~$W^i_{j+1}$ we want to go incrementally from
$M^{i}_{j}$ to
$M^{i}_{j+1}$.
Direct comparison of the respective defining sets leads to
\begin{equation}\label{D-recursion}
  M^{i}_{j+1} = \max\ \{
M^{i}_{j}\}
\cup 
 \{\,W^{i'}_{j'}\mid
R^{i'}=j,\ j'<L^{i}
\,\}
\end{equation}
%

\begin{figure}[tb]
    \centering

\includegraphics{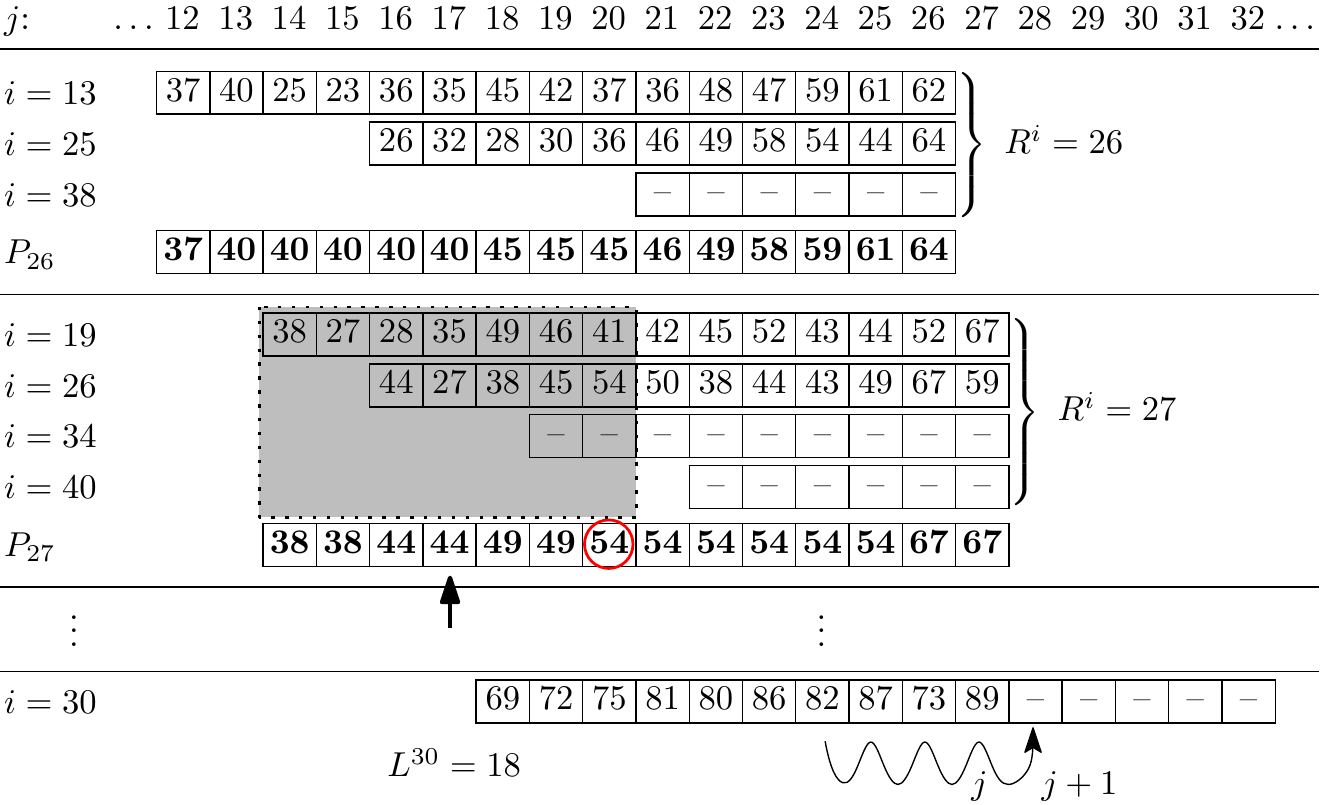}
   
    \caption{Example.
We are in the process of filling row $30$ from left to right. All rows with smaller
index $i$ have
been processed and are filled with the entries $W^i_j$.
Unprocessed entries are marked as ``--''.
 The figure does not show the rows in the order
 in which they are processed,
but intervals with the same right endpoint $R^i=r$ are grouped together.
The bold entries collect the provisional maxima $P_r$ in each group.
By way of example, the encircled entry $P_{27}[{20}]=54$
 is the maximum
among the shaded entries of the intervals that end at $R^i=27$,
ignoring the yet unprocessed entries.
As we proceed from $j=27$ to $j=28$ in row $30$, the intervals with
$R^i=27$ become relevant. The maximum usable entry from these
intervals is found in
position 17 of this array, because $17=L^{30}-1$.
The entry $P_{27}[{17}]=44$ is marked by an arrow.
The next entry
$W^{30}_{28}$ will be computed as
$C^{30}_{28}+\max\{P_{27}[17],P_{26}[17],\ldots,P_{17}[17]\}$.
(Some of these entries might not exist.)
We can observe that the minimum over which
$P_{27}[{17}]$ is defined involves no unprocessed entries
(Lemma~\ref{thm:RelevantEntries}).
When the next row $i=34$ in the group with $R^i=27$ is later
filled, it will be necessary to update $P_{27}$.
}
    \label{fig:example}
  \end{figure}

In order to evaluate the maximum of the second set in \eqref{D-recursion}
efficiently, we group intervals $i'$ with a common right endpoint
$R^{i'}=r$ together.
Let $S_r$ be the earliest startpoint of an interval with
endpoint~$r$.
If there are no intervals with endpoint~$r$, we set $S_r:=r$.
(It would be more logical to set $S_r:=r+1$ in this case, but
this choice makes the algorithm simpler.)
We maintain an array $P_r[j]$
 for
 $S_r\le j \le r$ that is defined as follows:
\begin{align*}
  P_r[j] := \max \{\, W^{i'}_{j'}\mid{}&R^{i'}=r,\\
&\text {row $i'$ has already been processed},\\
&j'\le j\,\} \cup \{0\}
\end{align*}
In a sense, $P_r[j]$ is a provisional version of the expression $\max \{\,W^{i'}_{j'}\mid
R^{i'}=r,\ j'<j
\,\}$,
which takes into account only the already processed rows. 
For~\eqref{D-recursion}, we need the entry
 $P_j[{L^i-1}]$, and we will see that
 all relevant entries have already been computed whenever we access
 this entry.
Thus, we rewrite \eqref{D-recursion}:
\begin{equation}
  \label{eq:D}
  M^{i}_{j+1} =
  \begin{cases}
 \max \{
M^{i}_{j}, P_j[{L^i-1}]\},
&\text{if $L^i-1\ge S_j$ and, thus, $P_j[{L^i-1}]$ is defined}
\\
M^{i}_{j},
&\text{otherwise}
  \end{cases}
\end{equation}
The condition $L^i-1\ge S_j$ ensures that
the array index $L^i-1$ does not exceed the left
boundary of the array $P_j$.
Also, the index $L^i-1$
 never exceeds the right
boundary $j$ of the array $P_j$, 
since $L^i< j+1\le R^i$, and therefore
$L^i-1\le j$.
 Thus, $P_j[{L^i-1}]$ is always defined when it is accessed.

\newcommand{\theoremRelevantEntries}{
When entry $W^i_{j+1}$ is processed,
 \eqref{D-recursion} and \eqref{eq:D} define the same quantity
$  M^{i}_{j+1}$.
}
\begin{lemma}
\label{thm:RelevantEntries}
\theoremRelevantEntries
\end{lemma}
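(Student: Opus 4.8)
The plan is to check that the case distinction in \eqref{eq:D} reproduces exactly the maximum in \eqref{D-recursion}; the only substantive point is whether the provisional array entry $P_j[{L^i-1}]$ has already accumulated all the contributions it is supposed to. So I would start from the definition of $P_j[{L^i-1}]$, which is the maximum of $0$ together with all $W^{i'}_{j'}$ with $R^{i'}=j$, $j'\le L^i-1$, \emph{and row $i'$ already processed}, and show that under the conditions of the lemma this last qualification is vacuous. Then $P_j[{L^i-1}]$ equals $\max(\{0\}\cup\{\,W^{i'}_{j'}\mid R^{i'}=j,\ j'<L^i\,\})$, and since $M^i_j\ge 0$ by its definition, the first branch of \eqref{eq:D} becomes literally the right-hand side of \eqref{D-recursion}.

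To justify the vacuousness claim I would first record two elementary facts about the moment at which $W^{i}_{j+1}$ is computed: since $(i,j+1)\in E$ and column $j+1$ is reached from column $j$ within row $i$, we have $L^i\le j\le R^i-1$, so in particular $L^i-1<j$; and the rows are indexed in nondecreasing order of their left endpoints and are processed in that order, so rows $1,\dots,i-1$ are completely filled before row $i$ begins. Now for any interval $i'$ with $R^{i'}=j$ having an edge $(i',j')\in E$ with $j'<L^i$, we get $L^{i'}\le j'\le L^i-1<L^i$, and monotonicity of the ordering forces $i'<i$; hence row $i'$ has already been processed, so it does contribute to $P_j[{L^i-1}]$. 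In the other direction, an interval $i'$ with $R^{i'}=j$ and $L^{i'}\ge L^i$ has all its edges with endpoint $j'\ge L^{i'}\ge L^i>L^i-1$, so it contributes to neither $P_j[{L^i-1}]$ nor the second set of \eqref{D-recursion}. This yields the displayed identity and settles the branch $L^i-1\ge S_j$.

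For the other branch, $L^i-1<S_j$, I would argue that the second set in \eqref{D-recursion} is empty: by definition of $S_j$ every interval $i'$ with $R^{i'}=j$ has $L^{i'}\ge S_j>L^i-1$, hence $L^{i'}\ge L^i$, so it has no edge $(i',j')$ with $j'<L^i$; and if no interval ends at $j$ at all, then $S_j=j$, the inequality $L^i-1<j$ holds anyway, and the set is empty for the trivial reason that there is no such $i'$. Hence \eqref{D-recursion} collapses to $M^i_{j+1}=M^i_j$, which is precisely the value given by the second branch of \eqref{eq:D}.

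The step I expect to be the crux is the implication ``$L^{i'}<L^i\Rightarrow i'<i\Rightarrow$ row $i'$ already processed'': this is where the choice of processing order, the row-by-row (rather than column-by-column) filling, and the slightly artificial convention $S_r:=r$ for empty groups must all be lined up correctly. Everything else is a direct manipulation of the defining sets.
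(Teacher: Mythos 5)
Your proof is correct and follows essentially the same route as the paper's: the heart in both is that any contributing row $i'$ has $L^{i'}\le j'<L^i$, so by the nondecreasing processing order it is already filled, making the ``already processed'' qualifier in $P_j[L^i-1]$ redundant, while the remaining branch of \eqref{eq:D} is handled by showing the second set in \eqref{D-recursion} is empty. The only cosmetic difference is that you merge the paper's Cases 1 and 2 into a single branch (and you are slightly more explicit about the extra $0$ and $M^i_j\ge 0$), which changes nothing of substance.
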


\begin{proof}
We distinguish three cases.

\textbf{Case 1:}
No interval ends at~$j$, and accordingly, $S_j=j$.

In this case
 $M^i_{j+1}=M^i_j$ in \eqref{D-recursion} since its rightmost set is empty.
Since~$L^i< j+1\le R^i$, we have $L^i-1<S_j=j$ and, thus,
the right side of  \eqref{eq:D} evaluates also to $M^i_j$.

\textbf{Case 2:}
There exists an interval ending at~$j$, and~$L^i-1<S_j$.
The right side of  \eqref{eq:D} evaluates to $M^i_j$.
 In~\eqref{D-recursion},
intervals $i'$ that end at $R^{i'}=j$ have~$L^{i'}\ge
S_j>L^i-1$. Thus, an edge~$(i',j')$ with~$j'<L^i$ and~$R^{i'}=j$ does
not exist, and
the second set
 in~\eqref{D-recursion}
 is empty. Therefore,
 \eqref{D-recursion} evaluates to
 $M^i_{j+1}=M^i_j$.

\textbf{Case 3:}
There exists an interval ending at~$j$, and~$L^i-1\ge S_j$.
In this case, $P_j[{L^i-1}]$ is defined:
\begin{equation}\label{P-redundant}
P_j[{L^i-1}]=
\max \{\,W^{i'}_{j'}\mid R^{i'}=j,\ j'\le
L^i-1,\text{ row $i'$ already processed}\,\}
\end{equation}
For each entry 
$W^{i'}_{j'}$ with $j'<L^i$, we conclude that~$L^{i'}\le j'<L^i$ and,
thus, row $i'$ has already been processed. This means that the
condition that row $i'$ was processed is redundant, and
\eqref{P-redundant} coincides with the right side of~\eqref{D-recursion}.
\end{proof}

After processing row~$i$
with startpoint~$\ell=L^i$ and endpoint~$r=R^i$,
we have to update the values in $P_{r}[j]$.
This is straightforward. Figure~\ref{fig:example} illustrates the role of the arrays $P_r\lbrack j\rbrack$ when processing a row.

It remains to discuss the computation of the first value~$W^i_\ell$ of the row.
An edge~$(i',j'),j'<\ell$ and edge $(i,\ell)$ are independent if and only if the interval~$i'$ ends before~$\ell$, that is~$R^{i'}<\ell$.
Since we process the intervals in nondecreasing order by their
startpoints, it suffices to maintain a value~$F$ with the
maximum~$W^{i'}_{j'}$ 
in all \emph{finished} intervals: those intervals
$i'$ that end before~$\ell$. In other
words~$F=\max\{P_1[1],P_2[2],\ldots,P_{\ell-1}[\ell-1]\}$. This value
is easily maintained by updating~$F$ as~$\ell$ increases. The full
details
are stated as Algorithm~\ref{alg:weighted}.

The update of the array $P_r[j]$ in the second loop can be
integrated with the computation of $W^i_j$ in the first loop.
When this is done, the values $W^i_j$ need not be stored at all because they are
not used.
As stated earlier, when no interval ends at a point~$r\in V$, we set~$S_r=r$.
The array~$P_r$ consists of a single dummy entry~$P_r[r]=0$.
This way we avoid having to treat this special cases during the algorithm. 

\begin{algorithm}[tb]
\DontPrintSemicolon
\tcp{Preprocessing:}
\For{$r:=1$ \KwTo $n_V$}{
Find startpoint $S_r$ of the longest interval $[S_r,r]$ with endpoint $r$\;
Create an array  $P_r[S_r\ldt r]$ and initialize it to 0.\;
(If there is no such interval with endpoint $r$, set $S_r:=r$ and create an array with a single dummy entry $P_r[r]$ that will
remain at~0.)\;
}
\tcp{Main program:}
$F := 0$ \tcp*[h]{maximum entry in finished intervals}\;
\For{$\ell:=1$ \KwTo $n_U$}{
\tcp{$F = \max\{P_1[1],P_2[2],\ldots,P_{\ell-1}[\ell-1]\}$}
\ForAll
(\tcp*[h]{Process each interval $i$ that starts at $\ell$})
{rows $i\in U$ with $L^i=\ell$}{
$r := R^i$\;
\tcp{Process the $i$-th interval $[L^i,R^i]=[\ell,r]$ and fill row $i$ of the table:}
$M := M^i_{\ell}:= F$ \tcp*[h]{$M$ will be the current value of $M^{i}_{j}$}\;
$W^i_\ell := C^i_\ell + M$ \tcp*[h]{leftmost entry}\;
\For(\tcp*[h]{compute successive entries}){$j := \ell+1$ \KwTo $r$}{
\If{$S_j\le \ell-1$}{
$M := \max \{M, P_{j-1}[{\ell-1}]\}$ \tcp*[h]{$M^i_j := \max \{M^i_{j-1}, P_{j-1}[{\ell-1}]\}$}\;
}
$W^i_j := C^i_j + M$\;
}
\tcp{Go through the computed entries again to update the array $P_r$:}
$q := 0$ \tcp*[h]{the row maximum so far}\;
\For{$j:=\ell$ \KwTo $r$}{
$q := \max\{q,W^i_j\}$ \tcp*[h]{$q= \max\{0,W^i_{\ell},W^i_{\ell+1},\ldots,W^i_j\}$}\;
$P_r[j] := \max\{P_r[j],q\}$\;
}
}
$F := \max\{F, P_\ell[\ell]\}$ \tcp*[h]{update $F$ as $\ell$ is incremented}\;
}
\Return{$F$} \tcp*[h]{the maximum weight of an induced matching}\;
\caption{Weighted Maximum Matching}
\label{alg:weighted}
\end{algorithm}

We have described the computation of the \emph{value} of the optimal
matching.
It is straightforward to augment the program so that the optimal
matching itself can be recovered by backtracking how the optimal value
was obtained,
but this would clutter the program.

\begin{theorem}
A maximum-weight induced matching of an edge-weighted convex bipartite graph can be computed in $O(n+m)$ time.
\end{theorem}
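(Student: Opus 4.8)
The plan is to establish that Algorithm~\ref{alg:weighted} correctly computes the maximum weight of an induced matching of~$G$, that it runs in $O(n+m)$ time, and finally that the matching itself can be recovered within the same time bound.

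\emph{Correctness.} I would argue by induction on the order in which the rows are processed, which is nondecreasing in~$L^i$. The invariant is: at the beginning of the iteration for a left endpoint~$\ell$, every entry $W^{i'}_{j'}$ of a row $i'$ with $L^{i'}<\ell$ already holds the value prescribed by the recursion~\eqref{eq:recursion}, the arrays $P_r$ with $r<\ell$ already hold their defined values, and $F=\max\{0\}\cup\{P_r[r]\mid r<\ell\}$; since every row with $R^{i'}<\ell$ has $L^{i'}<\ell$ and is therefore already processed, this makes $F=\max\{0\}\cup\{W^{i'}_{j'}\mid R^{i'}<\ell\}$. Within the iteration, processing a row~$i$ with $\ell=L^i$ and $r=R^i$ has three parts. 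For the leftmost entry, $R^{i'}<\ell$ forces $j'\le R^{i'}<\ell=L^i$, so the defining set of $M^i_\ell$ reduces to $\{W^{i'}_{j'}\mid R^{i'}<\ell\}$ and the algorithm correctly sets $W^i_\ell:=C^i_\ell+F$. For the subsequent entries $W^i_j$ with $j>\ell$, Lemma~\ref{thm:RelevantEntries} shows that the incremental rule~\eqref{eq:D} used by the inner loop computes the same quantity $M^i_j$ as~\eqref{D-recursion}, hence the maximum-expression in~\eqref{eq:recursion}, and Lemma~\ref{thm:RecursionCorrect} (which rests on the transitivity Lemma~\ref{thm:InducedTransitiv}) shows that~\eqref{eq:recursion} genuinely computes $W^i_j$; thus $W^i_j:=C^i_j+M^i_j$ is correct. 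The third part, the $P_r$-update loop together with the assignment $F:=\max\{F,P_\ell[\ell]\}$ executed once $\ell$ is finished, restores the invariant for the next left endpoint. After the final iteration, $F=\max\{0\}\cup\{W^i_j\mid(i,j)\in E\}$, which by the definition of the $W^i_j$ is exactly the weight of a maximum-weight induced matching of~$G$.

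\emph{Running time.} I would first note that bringing $G$ into the normalized form assumed by the algorithm~\cite{DBLP:journals/jcss/BoothL76}, bucket-sorting the vertices of~$U$ by $L^i$ (so that the vertices with any given left endpoint can be enumerated), and computing all startpoints $S_r$ each take $O(n+m)$ time. The preprocessing loop is then dominated by allocating and zeroing the arrays $P_r$, whose total length is $\sum_r (r-S_r+1)$. For each $r$ that is the right endpoint of an interval, $r-S_r+1$ equals the length of the longest interval ending at~$r$; these longest intervals are pairwise distinct, so the sum of their lengths is at most $\sum_{(i,j)\in E}1=m$, and the right endpoints not used by any interval contribute one dummy entry each, at most $n_V$ in all. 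Hence the arrays take $O(n+m)$ space and are initialized in $O(n+m)$ time. In the main program, for each row~$i$ both inner loops run $R^i-L^i+1$ times, i.e.\ proportionally to the number of edges incident to~$i$, with $O(1)$ work per iteration; summed over all rows this is $O(m)$, and the $O(1)$-per-$\ell$ maintenance of~$F$ adds $O(n)$. Altogether the running time is $O(n+m)$.

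\emph{Recovering the matching, and the main difficulty.} To output an optimal matching and not merely its weight, I would store with each entry $W^i_j$ a pointer to the edge $(i',j')$ attaining the maximum in~\eqref{eq:recursion}, with a null pointer when that maximum is~$0$; this requires only tagging each provisional maximum — each $P_r[j]$ and the scalar~$F$ — with an edge that achieves it, at $O(1)$ extra cost per update, after which following pointers from the edge achieving the final value of~$F$ reconstructs an optimal induced matching in $O(n+m)$ further time. I expect the only delicate point of the whole argument to be bookkeeping rather than conceptual: one has to check that when the algorithm reads $P_j[L^i-1]$ this entry already aggregates all relevant \emph{processed} rows and no unprocessed ones — which is precisely the content of Lemma~\ref{thm:RelevantEntries} — and that the array-length sum above really is $O(m)$; both checks are routine.
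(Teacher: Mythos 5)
Your proposal is correct and follows essentially the same route as the paper: correctness of Algorithm~\ref{alg:weighted} via the recursion~\eqref{eq:recursion} together with Lemmas~\ref{thm:InducedTransitiv}, \ref{thm:RecursionCorrect} and~\ref{thm:RelevantEntries}, and a linear-time bound from the incremental maintenance of the $P_r$ arrays and~$F$. Your explicit accounting that the total length of the arrays $P_r$ is $O(n+m)$, and the pointer-based recovery of the matching, simply spell out details the paper leaves implicit, so there is nothing to object to.
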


\section{Maximum-Cardinality Induced Matchings}
\label{sec:unweighted}

For the unweighted version of the problem, we assume a compact representation of a convex bipartite graph~$G=(U,V,E)$, that is, for each~$i\in U$ we are given the startpoint~$L^i$ and endpoint~$R^i$ of its interval $\{L^i,L^i+1,\ldots,R^i\}$.
This makes it possible to obtain a linear runtime of $O(n)$.

The recursion \eqref{eq:recursion} can be specialized to the
unweighted case by setting $ C^{i}_{j}\equiv 1$.  
\begin{equation}
  \label{eq:recursion-unweighted}
  W^{i}_{j}
 = 1
 + \max\, \{\,W^{i'}_{j'}\mid
R^{i'}<j,\ j'<L^{i}
\,\}\cup\{0\}
\end{equation}
This
recursion has already been stated in~\cite{DBLP:journals/tcs/YuCM98} and~\cite{DBLP:journals/tcs/BrandstadtES07}
in a slightly different formulation.
Yu, Chen and Ma~\cite{DBLP:journals/tcs/YuCM98} describe it as a
greedy-like procedure that ``colors'' the edges of a bipartite graph
 with the values~$W^i_j$. From this coloring, they obtain both a maximum-cardinality induced matching and a minimum chain cover. The original implementation given in~\cite{DBLP:journals/tcs/YuCM98} runs in time~$O(m^2)$.
Brandstädt, Eschen and Sritharan~\cite{DBLP:journals/tcs/BrandstadtES07} give an improved implementation of the coloring procedure with runtime~$O(n^2)$.
Our Algorithm~\ref{alg:weighted} from Section~\ref{sec:weighted} obtains the values~$W^i_j$ in total time~$O(n+m)$.

Given a compact representation, we can exploit some structural properties of the filled dynamic-programming table to further improve the runtime to~$O(n)$.
The following observations were first given in~\cite{DBLP:journals/tcs/YuCM98} and~\cite{DBLP:journals/tcs/BrandstadtES07}.

\begin{lemma}[{\cite[Lemma 5]{DBLP:journals/tcs/YuCM98}}]
\label{thm:XPlusOne}
The values $W^i_j$ are nondecreasing in each row.
\end{lemma}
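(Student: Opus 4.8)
The plan is to prove that $W^i_j \le W^i_{j+1}$ for every $i$ and every $j$ with $L^i \le j < R^i$, i.e.\ within a single row the entries only grow as we move right. The natural tool is the recursion~\eqref{eq:recursion}, or more conveniently its decomposition via $M^i_j$: since $C^i_j \equiv 1$ in the unweighted case, we have $W^i_j = 1 + M^i_j$ and $W^i_{j+1} = 1 + M^i_{j+1}$, so it suffices to show $M^i_j \le M^i_{j+1}$. But $M^i_{j+1}$ is the maximum over the set $\{W^{i'}_{j'} \mid R^{i'} < j+1,\ j' < L^i\}\cup\{0\}$, which \emph{contains} the set $\{W^{i'}_{j'} \mid R^{i'} < j,\ j' < L^i\}\cup\{0\}$ defining $M^i_j$ (relaxing $R^{i'}<j$ to $R^{i'}<j+1$ only adds candidates; the constraint $j' < L^i$ is unchanged). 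A maximum over a larger set is at least as large, so $M^i_j \le M^i_{j+1}$, hence $W^i_j \le W^i_{j+1}$.

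Concretely, the steps I would carry out are: first, recall the definition of $M^i_j$ and observe $W^i_j = C^i_j + M^i_j$ with $C^i_j = 1$; second, note that the index set over which $M^i_{j+1}$ maximizes is a superset of the one for $M^i_j$, because the only difference is the condition $R^{i'} < j$ versus $R^{i'} < j+1$, and every edge counted in the former is counted in the latter (the other condition $j' < L^i$ does not depend on $j$); third, conclude monotonicity of $M$, and therefore of $W$, along the row. One can alternatively phrase this directly from~\eqref{D-recursion}: $M^i_{j+1}$ is defined as $\max$ of $\{M^i_j\}$ together with another set, so $M^i_{j+1} \ge M^i_j$ immediately.

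I do not expect any real obstacle here — the statement is essentially a monotonicity-by-set-inclusion argument and the recursion was purpose-built to make it transparent. The only point requiring a line of care is confirming that $j$ ranges correctly (i.e.\ that both $W^i_j$ and $W^i_{j+1}$ are genuine table entries, which holds precisely when $L^i \le j < R^i$, so that $j+1 \le R^i$), and that relaxing the inequality on $R^{i'}$ is the \emph{only} change between the two defining sets. Once those are checked, the inequality is immediate.
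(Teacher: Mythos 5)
Your argument is correct and is exactly the paper's proof: the set over which the maximum in \eqref{eq:recursion-unweighted} is taken only grows as $j$ increases (the condition $R^{i'}<j$ is relaxed while $j'<L^i$ is unchanged), so the row entries are nondecreasing. The extra care about the range of $j$ and the decomposition via $M^i_j$ is fine but adds nothing beyond the paper's one-line set-inclusion observation.
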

\begin{proof}
This is obvious from
  \eqref{eq:recursion-unweighted}, since 
the set
  over which the maximum is taken increases with~$j$.
\end{proof}

\begin{lemma}[{\cite[Lemma 3.3, Lemma 3.4]{DBLP:journals/tcs/BrandstadtES07}}]
\label{thm:AtMostTwo}
Each row contains at most two consecutive values.
\end{lemma}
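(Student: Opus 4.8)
The plan is to prove the slightly stronger statement that $W^i_{R^i}\le W^i_{L^i}+1$ holds for every row $i$. Once this is established, the lemma follows at once: by Lemma~\ref{thm:XPlusOne} the entries $W^i_{L^i},W^i_{L^i+1},\dots,W^i_{R^i}$ are nondecreasing, and since they are positive integers, they must all equal either $W^i_{L^i}$ or $W^i_{L^i}+1$. Because $W^i_j=1+M^i_j$ by~\eqref{eq:recursion-unweighted}, the inequality to prove is equivalent to $M^i_{R^i}\le M^i_{L^i}+1$.

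I would establish this by peeling off two levels of the recursion and chasing column indices. If $M^i_{R^i}=0$ there is nothing to show, so assume the maximum defining $M^i_{R^i}$ is attained at an edge $(i',j')\in E$; by definition it satisfies $R^{i'}<R^i$ and $j'<L^i$, and $W^{i'}_{j'}=M^i_{R^i}=1+M^{i'}_{j'}$. If $M^{i'}_{j'}=0$, then $M^i_{R^i}=1\le M^i_{L^i}+1$ trivially. Otherwise the maximum defining $M^{i'}_{j'}$ is attained at an edge $(i'',j'')\in E$ with $R^{i''}<j'$ and $j''<L^{i'}$.

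The key observation is then just that this edge $(i'',j'')$ is eligible for the maximum defining $M^i_{L^i}$. Indeed, $R^{i''}<j'<L^i$, and since $(i',j')\in E$ we have $L^{i'}\le j'$, hence $j''<L^{i'}\le j'<L^i$; so both $R^{i''}<L^i$ and $j''<L^i$ hold. Therefore $M^{i'}_{j'}=W^{i''}_{j''}\le M^i_{L^i}$, and consequently $M^i_{R^i}=1+M^{i'}_{j'}\le M^i_{L^i}+1$, as desired.

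I expect the only real obstacle to be keeping the bookkeeping straight: tracking which edge realizes each of the two nested maxima and verifying that both lie strictly to the left of column $L^i$. Note that no case distinction on the location of $R^{i'}$ is needed; the argument above covers $R^{i'}<L^i$ as well (in that case one even gets the stronger bound $M^i_{R^i}\le M^i_{L^i}$ directly, since $(i',j')$ itself is then eligible for $M^i_{L^i}$). The same reasoning can equivalently be phrased combinatorially: starting from an induced matching of size $W^i_{R^i}$ that uses $(i,R^i)$, delete its last two edges $(i,R^i)$ and $(i',j')$; by Observation~\ref{thm:InducedChar} the new last edge still lies in columns smaller than $L^i$, so it witnesses $W^i_{L^i}\ge W^i_{R^i}-1$.
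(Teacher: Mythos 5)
Your proof is correct and follows essentially the same route as the paper: both unroll the recursion~\eqref{eq:recursion-unweighted} two levels deep and observe that the twice-removed edge $(i'',j'')$ satisfies $R^{i''}<L^i$ and $j''<L^i$, hence remains eligible at the left end of the row (the paper phrases this as a contradiction with a hypothetical entry $W^i_k\le W^i_j-2$, you phrase it as the direct bound $W^i_{R^i}\le W^i_{L^i}+1$, which is an immaterial difference). Your closing combinatorial remark about deleting the last two edges is likewise the paper's informal first paragraph.
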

\begin{proof}
Let $W^i_j$ be the largest value in some row $i$.
Then, if we take a corresponding matching 
 of size $W^i_j$,
it is easy to see that we
can remove the last two edges and replace them by
an arbitrary edge $(i,k)$. This proves that $W^i_{k}\ge W^i_j-1$.

More formally, we can argue by the recursion  \eqref{eq:recursion-unweighted}:
Assume there are values~$W_k^i\le W_j^i-2$ in row~$i$. By Lemma~\ref{thm:XPlusOne} we can assume~$k<j$.
By~\eqref{eq:recursion-unweighted},~$W_j^i=1+W_{j'}^{i'}=2+W_{j''}^{i''}$ with~$R^{i''}<j'<L^i$ for some~$i''<i'<i$. Thus, $j''\le R^{i''}<j'<L^i\le k$ and by definition of~$W_k^i$ according to \eqref{eq:recursion-unweighted} we have $W_j^i-2=W_{j''}^{i''}<W_k^i\le W_j^i-2$, which is a contradiction.
\end{proof}

\begin{algorithm}[tb]
\DontPrintSemicolon
Set $Q_1 := Q_2 := \cdots := Q_{n_U} := 0$\;
$F := 0$\;
\For{$\ell :=1$ \KwTo $n_V$}{
\ForAll
(\tcp*[h]{Process each interval $i$ that starts at $\ell$})
{rows $i\in U$ with $L^i=\ell$}{
$w := 
F+1$ \tcp{leftmost entry}
$t_w := {}$leftmost endpoint $R^{i'}$ of a row $i'$ that contains
an entry $W^{i'}_j=w$ with $j<L^i\equiv \ell$\;
\eIf(\tcp*[h]{There are two values $w$ and $w+1$ in this row:}){$t_w<R^i$}
{
\tcp{$W^i_j=w\phantom{{}+1}$ \ for $j=L^i,\dots,t_w$ }
\tcp{$W^i_j=w+1$ \ for $j=t_w+1,\ldots,R^i$ }
$Q_{R^i} := \max \{Q_{R^i},w+1\}$ \tcp{The largest entry is $w+1$.}
}
(\tcp*[h]{The same entry $w$ is used for the whole row.})
{
$Q_{R^i} := \max \{Q_{R^i},w\}$ \tcp{The largest entry is $w$.}
}
}
$F := \max \{F, Q_\ell\}$\ \tcp{update $F$ as $\ell$ advances}
}
\Return{$F$}
\caption{Unweighted Maximum Matching, initial version}
\label{alg:unweighted-1}
\end{algorithm}

\begin{algorithm}[tb]
\DontPrintSemicolon
 \nlset{$\bigtriangleup$} Set $t_1 := t_2 := \cdots := t_{n_U} :=
 n_V+1$ \tcp{The value $n_V+1$ acts like $\infty$.}
Set $Q_1 := Q_2 := \cdots := Q_{n_U} := 0$\;
$F := 0$\;
\For{$\ell :=1$ \KwTo $n_V$}{
\ForAll
(\tcp*[h]{Process each interval $i$ that starts at $\ell$})
{rows $i\in U$ with $L^i=\ell$}{
$w := 
F+1$ \tcp{leftmost entry}
 \nlset{$\bigtriangleup$} \tcp{$t_w$ is no longer computed from scratch}
\eIf(\tcp*[h]{There are two values $w$ and $w+1$ in this row:}){$t_w<R^i$}
{
\tcp{$W^i_j=w\phantom{{}+1}$ \ for $j=L^i,\dots,t_w$, }
\tcp{$W^i_j=w+1$ \ for $j=t_w+1,\ldots,R^i$. }
$Q_{R^i} := \max \{Q_{R^i},w+1\}$ \tcp{The largest entry is $w+1$.}
}
(\tcp*[h]{The same entry $w$ is used for the whole row.})
{
$Q_{R^i} := \max \{Q_{R^i},w\}$ \tcp{The largest entry is $w$.}
}
}
$F := \max \{F, Q_\ell\}$ \tcp{update $F$ as $\ell$ is incremented}
\nlset{$\bigtriangleup$} \ForAll{entries $W^{i'}_\ell$ in column $\ell$}{
 \nlset{$\bigtriangleup$} $w := W^{i'}_\ell$\;
 \nlset{$\bigtriangleup$} $t_w := \min \{t_w, R^{i'}\}$;
 }
}
\Return{$F$}
\caption{Unweighted Maximum Matching, second version}
\label{alg:unweighted-2}
\end{algorithm}

 Specializing Algorithm~\ref{alg:weighted} to the unweighted case
leads to a solution with $O(m)$ running time.
Our $O(n)$-time algorithm will follow the general scheme of Algorithm~\ref{alg:weighted},
with the following modifications.
\begin{itemize}
\item 
 In view of Lemmas \ref{thm:XPlusOne} and~\ref{thm:AtMostTwo}, we will not fill each
  row individually, but we will just determine the
  leftmost value $w
$ and the position where
  the entries switch from $w$ to $w+1$ (if any).
\item  The computation of the leftmost entry is exactly as
  in Algorithm~\ref{alg:weighted}.
\item The position 
 where
the entries of row $i$ switch from $w$ to $w+1$ can be determined from~\eqref{eq:recursion-unweighted}:
If
there is a row $i'$ containing an entry $w$ left of $L^i$, 
then $W^i_j$ must be $w+1$ as soon as $j>R^{i'}$.
The algorithm determines the
 \emph{threshold position} $t_w$ as
 the smallest right endpoint $R^{i'}$ under these constraints.
Then the entries  $w+1$ in row $i$ start at $j=t_w+1$ if these entries
are still part of the row.
\item 
 We do not maintain the whole array $P_r$ for each $r$, but only its last entry
 $P_r[r]$;
this is sufficient for updating $F$ and thus for computing
  the leftmost entries in the rows. We call this value~$Q_r$.
\end{itemize}
This leads to Algorithm~\ref{alg:unweighted-1}.

We will improve Algorithm~\ref{alg:unweighted-1} by maintaining the values $t_w$ instead
of computing them from scratch. 
We use the fact that the smallest value $w$ in the row is known, and
hence we can associate $t_w$ with the value $w$ instead of the row
index~$i$, as is already apparent from our chosen notation.
We update $t_w$ whenever $\ell$ increases.
The details are shown in Algorithm~\ref{alg:unweighted-2}. The differences to Algorithm~\ref{alg:unweighted-1} are marked by $\bigtriangleup$.


This still does not achieve $O(n)$ running time.
The final improvement comes from realizing that it is sufficient to
update $t_w$ when $W^{i'}_l$ is the \emph{leftmost} entry
  $w$ in row~$i'$.  The time when such an update occurs can be predicted when
  a row is generated.
To this end, we maintain a list $\mathcal{T}_j$ for $j=1,\ldots,n_V$ that
records the updates that are due when $\ell$ becomes~$j$.
This final version is Algorithm~\ref{alg:unweighted-3}.

\begin{algorithm}[tb]
\DontPrintSemicolon
\nlset{$\bigtriangleup$} Initialize lists $\mathcal{T}_1,\ldots,\mathcal{T}_{n_V}$ to empty lists\;
Set $t_1 := t_2 := \cdots := t_{n_U} := n_V+1$\;
Set $Q_1 := Q_2 := \cdots := Q_{n_U} := 0$\;
$F := 0$\;
\For{$\ell :=1$ \KwTo $n_V$}{
\ForAll
(\tcp*[h]{Process each interval $i$ that starts at $\ell$})
{rows $i\in U$ with $L^i=\ell$}{
$w := 
F+1$ \tcp{leftmost entry}
\eIf(\tcp*[h]{There are two values $w$ and $w+1$ in this row:}){$t_w<R^i$}
{
\tcp{$W^i_j=w\phantom{{}+1}$ \ for $j=L^i,\dots,t_w$, }
\tcp{$W^i_j=w+1$ \ for $j=t_w+1,\ldots,R^i$. }
\nlset{$\bigtriangleup$} \noindent\hbox spread -7pt{add $(w+1,R^i)$ to the list
  $\mathcal{T}_{t_w+1}$ }%
\tcp
{don't forget to update $t_{w+1}$ when $\ell$ reaches $t_w+1$}
\nlset{$\bigtriangleup$} \noindent\rlap{add $(w,R^i)$ to the list $\mathcal{T}_\ell$}%
\phantom{\hbox spread -7pt{add $(w+1,R^i)$ to the list
    $\mathcal{T}_{t_w+1}$ }}%
 \tcp
{don't forget to update $t_{w}$ when $\ell$ advances}
$Q_{R^i} := \max \{Q_{R^i},w+1\}$ \;
}
(\tcp*[h]{The same entry $w$ is used for the whole row.})
{
\nlset{$\bigtriangleup$} add $(w,R^i)$ to the list $\mathcal{T}_\ell$\;
$Q_{R^i} := \max \{Q_{R^i},w\}$ \;
}
}
$F := \max \{F, Q_\ell\}$ \tcp{update $F$ as $\ell$ advances}
\nlset{$\bigtriangleup$} \lForAll{
 $(w,r)\in \mathcal{T}_\ell$}
{$t_w := \min \{t_w, r\}$ \tcp*[h]{perform the necessary updates}}
}
\Return{$F$}
\caption{Unweighted Maximum Matching, final version}
\label{alg:unweighted-3}
\end{algorithm}
The runtime of Algorithm~\ref{alg:unweighted-3} is $O(n_U+n_V)$: Processing each interval $i$ takes
constant time and adds at most two pairs to the lists
 $\mathcal{T}$. Thus, processing
the lists $\mathcal{T}$ for updating the $t_w$ array takes also only
$O(n_U)$ time. 

Some simplifications are possible:
The addition of $(w,R^i)$ to the list $\mathcal{T}_\ell$ in the case
of two values can actually be omitted, as it leads to no decrease in
$t_w$: $t_w$ is already $<R^i$.
The algorithm could be further streamlined by observing that at most
two consecutive values of $t_w$ 
need to be remembered at any time.

Again, it is easy to modify the algorithm to return a maximum induced
matching in addition to its size.

\begin{theorem}
Given a compact representation, a maximum-cardinality induced matching of a convex bipartite graph can be computed in $O(n)$ time.
\end{theorem}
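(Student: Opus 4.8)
The plan is to show that Algorithm~\ref{alg:unweighted-3} correctly implements the unit-weight specialization of the dynamic program of Section~\ref{sec:weighted} and runs in $O(n)$ time. Correctness need not re-derive recursion~\eqref{eq:recursion-unweighted} --- that is Lemma~\ref{thm:RecursionCorrect} --- but rather argue that the algorithm implicitly reconstructs every table entry $W^i_j$ and returns $\max\{W^i_j\mid (i,j)\in E\}$, which is the size of a maximum-cardinality induced matching, since the rightmost edge of any induced matching determines an entry that witnesses it. What makes the three versions Algorithms~\ref{alg:unweighted-1}--\ref{alg:unweighted-3} possible is Lemmas~\ref{thm:XPlusOne} and~\ref{thm:AtMostTwo}: a row $i$ is completely described by its leftmost value $w^i:=W^i_{L^i}$ together with a \emph{threshold} $t_{w^i}$, the row being equal to $w^i$ on $[L^i,t_{w^i}]$ and to $w^i+1$ on $[t_{w^i}+1,R^i]$ (the latter range possibly empty). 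I would first extract from~\eqref{eq:recursion-unweighted} the two facts used by the algorithm: (a) $w^i=1+\max(\{W^{i'}_{j'}\mid R^{i'}<L^i\}\cup\{0\})$, the very quantity $F+1$ that Algorithm~\ref{alg:weighted} uses for the leftmost entry; and (b) the true threshold equals $\min(\{n_V{+}1\}\cup\{R^{i'}\mid \exists\, j'<L^i:\ W^{i'}_{j'}=w^i\})$. For (b), a short induction on the value peels one step off~\eqref{eq:recursion-unweighted}: any occurrence of a value $\ge w^i$ to the left of $L^i$ forces an occurrence of value exactly $w^i$ to the left of $L^i$ with a no-larger right endpoint, so only occurrences of value exactly $w^i$ need to be tracked --- which is what the $t_w$ machinery does.

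The heart of the proof is an induction on $\ell$ maintaining, at the start of iteration $\ell$, three invariants: $F=\max(\{W^{i'}_{j'}\mid R^{i'}<\ell\}\cup\{0\})$; $Q_r=\max(\{0\}\cup\{\text{largest entry of row }i'\mid R^{i'}=r,\ L^{i'}<\ell\})$ for every $r$; and $t_w=\min(\{n_V{+}1\}\cup\{R^{i'}\mid \exists\,j'<\ell:\ W^{i'}_{j'}=w\})$ for every $w$. Granting these at step $\ell$: for each interval with $L^i=\ell$ the value $w^i=F+1$ is correct by (a); the test $t_{w^i}<R^i$ decides correctly, by (b), whether the row carries one or two distinct values; $Q_{R^i}$ is updated with the correct largest entry; and the pairs scheduled into the lists $\mathcal T$ are exactly $(w^i,R^i)$, due at column $L^i$, and, in the two-value case, $(w^i{+}1,R^i)$, due at the switch column $t_{w^i}{+}1$, which is the leftmost column carrying value $w^i{+}1$ in row $i$. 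The end-of-iteration updates $F:=\max\{F,Q_\ell\}$ and $t_w:=\min\{t_w,r\}$ over $(w,r)\in\mathcal T_\ell$ then re-establish all three invariants for $\ell+1$; in particular, since $Q_\ell$ by then aggregates the largest entries of all rows ending at $\ell$, $F$ becomes $\max(\{W^{i'}_{j'}\mid R^{i'}\le\ell\}\cup\{0\})$.

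The delicate point, and the step I expect to be the main obstacle, is invariant~(iii) together with the scheduling that supports it. One has to check that a row $i'$ satisfies ``$W^{i'}_{j'}=w$ for some $j'<\ell$'' exactly when the \emph{leftmost} column of row $i'$ carrying value $w$ is $<\ell$; this is immediate, and it is what justifies that scheduling the single update $(w,R^{i'})$ at that leftmost column --- rather than once per occurrence, as the $O(m)$-time Algorithm~\ref{alg:unweighted-2} does --- loses nothing. More subtly, one must verify non-circularity: to schedule the $(w^i{+}1,R^i)$ update one reads $t_{w^i}$ while processing row $i$, so $t_{w^i}$ must already be correct at that moment; but every update contributing to $t_{w^i}$ at that moment was scheduled at a column strictly smaller than $L^i=\ell$ and has therefore already been applied at the end of an earlier iteration, so the induction on $\ell$ goes through, while updates scheduled at $\mathcal T_\ell$ (coming from rows that do not yet ``see'' value $w^i$ left of $\ell$) are correctly applied only afterwards.

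For the running time: bucketing the intervals by left endpoint and initializing the arrays $t$, $Q$, $F$ and the lists $\mathcal T_1,\dots,\mathcal T_{n_V}$ take $O(n_U+n_V)$ time; the outer loop runs $n_V$ times; each interval is handled in $O(1)$ time and inserts at most two pairs into the lists $\mathcal T$, so the interval processing and the draining of all the lists $\mathcal T_\ell$ together cost $O(n_U)$. Hence Algorithm~\ref{alg:unweighted-3} runs in $O(n_U+n_V)=O(n)$ time. Finally, storing with each row a pointer to the entry that realized its leftmost value lets one backtrack through a maximum-cardinality induced matching --- which has at most $n/2$ edges --- in $O(n)$ additional time, so the matching itself, and not merely its size, is computed within the claimed bound.
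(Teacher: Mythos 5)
Your proposal is correct and follows essentially the same route as the paper: it establishes the theorem by arguing that Algorithm~\ref{alg:unweighted-3} implements the unit-weight recursion~\eqref{eq:recursion-unweighted} using Lemmas~\ref{thm:XPlusOne} and~\ref{thm:AtMostTwo} (row = leftmost value $F+1$ plus threshold $t_w$, updates scheduled via the lists $\mathcal{T}_\ell$ at the leftmost occurrence of each value), with the same $O(n_U+n_V)$ accounting. The only difference is presentational: you make explicit the loop invariants and the reduction of ``entry $\ge w$ left of $L^i$'' to ``entry $=w$ left of $L^i$'', which the paper treats informally through its sequence of algorithm refinements.
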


\section{Minimum Chain Covers}
\label{sec:chain}

In convex bipartite graphs, the size of a maximum-cardinality induced matching equals the number of chain subgraphs of a minimum chain cover~\cite{DBLP:journals/tcs/YuCM98}. In this section we use this duality and extend our Algorithm~\ref{alg:unweighted-3}  to obtain a minimum chain cover of a convex bipartite graph~$G=(U,V,E)$.

Let~$W^*$ be the cardinality of a maximum induced matching of~$G$.
Accordingly, the values~$W^i_j$ cover the range $\lbrace 1,\ldots,W^*\rbrace$.
We create~$W^*$ chain subgraphs~$Z_1,\ldots,Z_{W^*}$ of~$G$.
The edges $(i,j)$ with~$W^i_j=w$ will be part of the chain
subgraph~$Z_w$.

As already observed in~\cite{DBLP:journals/tcs/YuCM98},
the edges with a fixed value of $W^i_j$
may contain independent edges and, thus, do not necessarily constitute a chain graph.
Accordingly, Yu, Chen, and Ma~\cite{DBLP:journals/tcs/YuCM98} describe a strategy to extend the edge set for each value of~$W^i_j=w$ to a chain graph~$Z_w$.
Their original implementation
runs in time~$O(m^2)$.
Brandstädt, Eschen, and Sritharan~\cite{DBLP:journals/tcs/BrandstadtES07} give an improved implementation with runtime~$O(n^2)$.
We implement their strategy in~$O(n)$ time, given a compact representation.
The correctness was already shown in~\cite{DBLP:journals/tcs/YuCM98}.
We give a new independent proof. The following characterization is often used as an alternative definition of chain graphs:

\begin{lemma}\label{lem:chain-characterization}
  A bipartite graph $(\bar U,\bar V,\bar E)$ is a chain graph if and only if the sets of
 neighbors $\bar V(i)
 := \{\,j\in \bar V\mid (i,j)\in \bar E\,\}$ of the vertices
 $i\in \bar U$ form a chain in the inclusion order. \textup(Equal sets are
 allowed.\textup)
In other words, among any two sets
$\bar V(i)$ and $\bar V(i')$, one must be contained in the other.
\end{lemma}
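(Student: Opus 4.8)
The plan is to prove both directions directly from the definition of a chain graph as a bipartite graph containing no two independent edges, that is, no four vertices inducing a $2K_2$.

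For the ``if'' direction, I would assume that the neighbor sets $\bar V(i)$, $i\in\bar U$, form a chain under inclusion, and consider an arbitrary pair of edges $(i,j),(i',j')\in\bar E$. If these two edges share an endpoint, they are trivially dependent, so I may assume $i\neq i'$ and $j\neq j'$. Since the neighbor sets form a chain, one contains the other; \wlg $\bar V(i)\subseteq\bar V(i')$. Then $j\in\bar V(i)\subseteq\bar V(i')$, so $(i',j)\in\bar E$. This edge joins an endpoint of $(i,j)$ to an endpoint of $(i',j')$, so $\{i,i',j,j'\}$ does not induce a $2K_2$, and the two edges are dependent. As the pair was arbitrary, $(\bar U,\bar V,\bar E)$ has no two independent edges and is a chain graph.

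For the ``only if'' direction I would argue by contraposition: assume the neighbor sets do \emph{not} form a chain. Then there are $i,i'\in\bar U$ with $\bar V(i)\not\subseteq\bar V(i')$ and $\bar V(i')\not\subseteq\bar V(i)$, so I can pick $j\in\bar V(i)\setminus\bar V(i')$ and $j'\in\bar V(i')\setminus\bar V(i)$. Then $(i,j),(i',j')\in\bar E$, while $(i,j')\notin\bar E$ and $(i',j)\notin\bar E$. Moreover $i\neq i'$ (otherwise the two neighbor sets would coincide) and $j\neq j'$ (since $j\in\bar V(i)$ but $j'\notin\bar V(i)$), so $\{i,i',j,j'\}$ are four distinct vertices inducing a $2K_2$. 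Hence $(i,j)$ and $(i',j')$ are independent and the graph is not a chain graph.

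I do not expect a genuine obstacle here; the lemma is essentially folklore. The only points requiring care are the degenerate cases in the first direction (when the two chosen edges coincide or share a vertex) and the verification in the second direction that the witnesses $i,i',j,j'$ are indeed four distinct vertices whose induced subgraph is exactly two disjoint edges.
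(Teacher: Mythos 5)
Your proof is correct and matches the paper's approach: the paper simply notes that the lemma is a direct consequence of the fact that $(i,j)$ and $(i',j')$ are independent if and only if $j'\notin\bar V(i)$ and $j\notin\bar V(i')$, and your two directions are just the fully spelled-out version of that observation. Nothing to fix.
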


\begin{proof}
  This is a direct consequence of
the fact that
edges $(i,j)$ and $(i',j')$ are independent if and only if $j'\notin
\bar V(i)$ and $j\notin \bar V(i')$.
\end{proof}
The condition that the neighborhoods must form a chain is apparently the
reason for calling these graphs \emph{chain graphs}, however, we did not find a reference for this.

We use~$U_w$ to denote the set of rows that contain entries $W^i_j=w$.
For every row~$i\in U_w$, we determine the beginning and ending 
points~$B^i_w,E^i_w$ with this color, that is, $W^i_j=w \iff B^i_w\le j\le E^i_w$.
We extend every such interval~$[B^i_w,E^i_w]$ to the left by choosing
a new starting point~$\hat B^i_w$ according to the formula
\begin{align}
  \label{eq:chainGoal}
  \hat B^i_w
&:=\min \lbrace B^i_w\rbrace \cup \lbrace\, B^{i'}_w\mid
i'\in U_w,\
E^{i'}_w<E^i_w\,\rbrace\\
\label{eq:chainGoal-hat}
&\mathrel{\phantom{:=}\llap{=}
}\min \lbrace B^i_w\rbrace \cup \lbrace\, \hat B^{i'}_w\mid
i'\in U_w,\
 E^{i'}_w<E^i_w\,\rbrace
\end{align}
The second expression uses the new values $\hat B$ on the
right-hand side.
It is easy to see that the two expressions are equivalent:
%
%
Using \eqref{eq:chainGoal} for the definition of
$\hat B^{i'}_w$, the expression 
\eqref{eq:chainGoal-hat} becomes
\begin{equation}
\label{eq:equalToHat}
\min \lbrace B^i_w\rbrace \cup \lbrace\, B^{i'}_w\mid
i'\in U_w,\
E^{i'}_w<E^i_w\,\rbrace \cup \lbrace B^{i''}_w\mid i''\in U_w,E^{i''}_w<E^{i'}_w<E^i_w,i'\in U_w\rbrace.
\end{equation}
The third set 
is contained in the second set, and thus, \eqref{eq:equalToHat} is
equal to~$\hat B^i_w$
according to~\eqref{eq:chainGoal}.

We construct the chain graph $Z_w$ as the graph with the extended intervals
$[\hat B^i_w,E^i_w]$.
Figure~\ref{fig:chain} shows an example.
\begin{figure}[tb]
  \centering
  \includegraphics{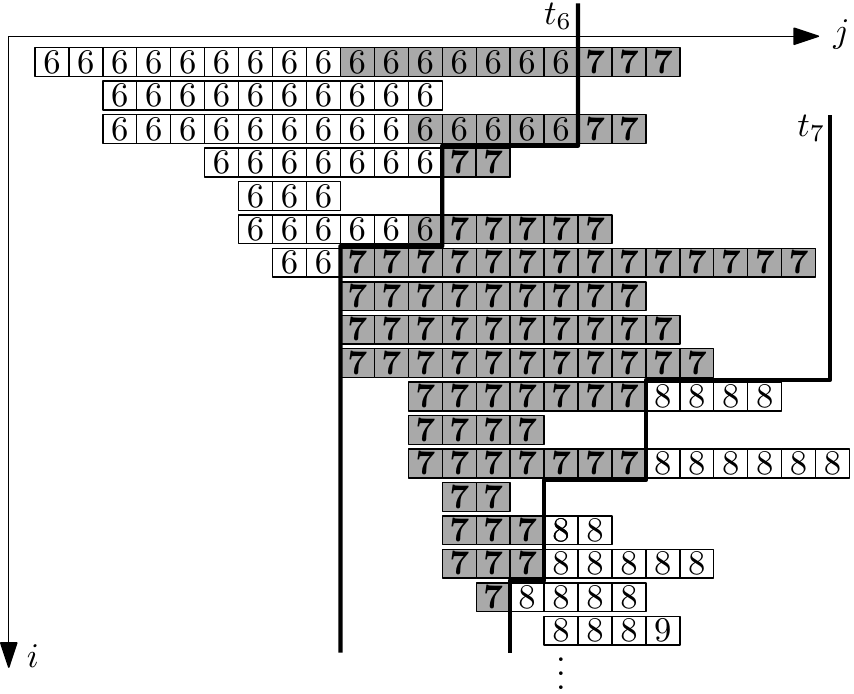}
  \caption{An example showing a section of the computation of $W^i_j$
by
Algorithm~\ref{alg:unweighted-3}.
The threshold values $t_6$ and $t_7$ are shown as they change
with the rows that are successively considered.
The shaded entries form the chain subgraph $Z_7$ that is used for the
chain cover.}
  \label{fig:chain}
\end{figure}
It is obvious by construction that these intervals satisfy the
conditions of a chain graph: By
Lemma~\ref{lem:chain-characterization}, we have to show that there are
no two intervals 
$[\hat B^i_w,E^i_w]$,
$[\hat B^{i'}_w,E^{i'}_w]$
with $\hat B^{i'}_w<\hat B^{i}_w$ and
$E^{i'}_w<E^{i}_w$. But if the last condition holds,
\eqref{eq:chainGoal-hat} ensures that
$\hat B^{i}_w\le \hat B^{i'}_w$.

The only thing that could go wrong is that
$\hat B^{i}_w$ becomes too small so that the chain graph is not
a subgraph of~$G$.
The following lemma shows that this is not the case.

\begin{lemma}
$\hat B^i_w\ge L^i$ for every~$i\in U_w$.
\end{lemma}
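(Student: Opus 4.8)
The plan is to reduce the claim to an inequality between $V$-indices and then to squeeze the needed information out of the dynamic-programming recursion~\eqref{eq:recursion-unweighted}, applied twice. Since $B^i_w\ge L^i$ holds trivially (the entries $W^i_j=w$ of row~$i$ occur at columns $j\in[L^i,R^i]$), the definition~\eqref{eq:chainGoal} of $\hat B^i_w$ shows that it suffices to prove $B^{i'}_w\ge L^i$ for every $i'\in U_w$ with $E^{i'}_w<E^i_w$. I would fix such a row $i'$, assume for contradiction that $B^{i'}_w<L^i$, and keep in mind the chain of inequalities $L^{i'}\le B^{i'}_w\le E^{i'}_w<E^i_w$.

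The key lever is the following consequence of the recursion applied to the entry $W^i_{E^i_w}=w$: writing out~\eqref{eq:recursion-unweighted} at $(i,E^i_w)$ shows that the maximum over $\{\,W^{a}_{b}\mid R^{a}<E^i_w,\ b<L^{i}\,\}\cup\{0\}$ equals $w-1$, hence \emph{no} edge $(a,b)\in E$ satisfies simultaneously $R^{a}<E^i_w$, $b<L^i$, and $W^{a}_{b}\ge w$; call this property $(\star)$ (note that when $w=1$ the set in the recursion is empty, so $(\star)$ is then only stronger and the argument is unaffected). Now apply $(\star)$ to the concrete edge $(i',B^{i'}_w)$, whose $V$-endpoint satisfies $B^{i'}_w<L^i$ and whose value is $W^{i'}_{B^{i'}_w}=w$: property $(\star)$ forbids $R^{i'}<E^i_w$, so we conclude $R^{i'}\ge E^i_w$.

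Combining this with the standing assumption $E^{i'}_w<E^i_w$ gives $E^{i'}_w<E^i_w\le R^{i'}$, so the table entry $W^{i'}_{E^i_w}$ exists (column $E^i_w$ lies in $[L^{i'},R^{i'}]$) and sits strictly to the right of $E^{i'}_w$, the last column of row~$i'$ carrying the value~$w$; since the entries of a row are nondecreasing (Lemma~\ref{thm:XPlusOne}), this forces $W^{i'}_{E^i_w}\ge w+1$. Applying~\eqref{eq:recursion-unweighted} once more, this time at $(i',E^i_w)$, yields an edge $(a,b)\in E$ with $R^{a}<E^i_w$, $b<L^{i'}$, and $W^{a}_{b}\ge w$. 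But $b<L^{i'}\le B^{i'}_w<L^i$, so this edge violates $(\star)$ — the desired contradiction, which establishes $B^{i'}_w\ge L^i$ and hence $\hat B^i_w\ge L^i$.

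I expect the only genuine obstacle to be realizing that the recursion must be invoked twice, with the first invocation used precisely to certify that the entry $W^{i'}_{E^i_w}$ required by the second invocation really lies inside row~$i'$ (i.e.\ that $E^i_w\le R^{i'}$); the remaining work is routine bookkeeping with the index inequalities above, and Lemma~\ref{thm:AtMostTwo} is not even needed.
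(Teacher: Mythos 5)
Your proof is correct and follows essentially the same route as the paper's: the same two invocations of the recursion~\eqref{eq:recursion-unweighted}, first at $(i,E^i_w)$ to force $E^i_w\le R^{i'}$, then at $(i',E^i_w)$ to produce an edge left of $L^{i'}<L^i$ that contradicts $W^i_{E^i_w}=w$. The only cosmetic difference is that you obtain $W^{i'}_{E^i_w}\ge w+1$ from Lemma~\ref{thm:XPlusOne} and the definition of $E^{i'}_w$, whereas the paper cites Lemma~\ref{thm:AtMostTwo} for the exact value $w+1$; this does not change the argument.
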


\begin{proof}
For the sake of contradiction, assume 
$\hat B^i_w< L^i$.
By~\eqref{eq:chainGoal}, there is a row~$i'\in U_w$ such that~$B^{i'}_w<L^i$ and $E^{i'}_w<E^i_w$.
Setting $j=E^i_w$
and $j'=B^{i'}_w$
 in the recursion~\eqref{eq:recursion-unweighted},
we conclude that $E^i_w\le R^{i'}$,
because otherwise,
\eqref{eq:recursion-unweighted} would imply $w=W^i_{E^i_w}\ge 1+
W^{i'}_{B^{i'}_w}= 1+w$.
Thus, $(i',E^i_w)$ is an edge of~$G$.
By Lemma~\ref{thm:AtMostTwo}, $W^{i'}_{E_w^i}=w+1$.
By~\eqref{eq:recursion-unweighted}, there is an edge~$(i'',j'')$ with
$W^{i''}_{j''}=w$, $R^{i''}<E_w^i$ and $j''<L^{i'}<L^i$.
Again by~\eqref{eq:recursion-unweighted}, such an edge $(i'',j'')$
would imply that $W^i_{E_w^i}\ge w+1$, a contradiction.
\end{proof}

Algorithm~\ref{alg:chain} carries out the computation
of~\eqref{eq:chainGoal}.
It 
processes the 
 triplets
$(B^i_w,E^i_w,w)$ in increasing order of the endpoints $E^i_w=r$. This can be
done in linear time, by first sorting the
 $O(n_U)$ triples
$(B^i_w,E^i_w,w)$
 into 
 $n_V$ buckets
 according to the value of $E^i_w$.
Thus, Algorithm~\ref{alg:chain} takes linear time~$O(n)$. By
Lemma~\ref{lem:chain-characterization}, the result is a chain cover,
which by duality is minimum.  Each row belongs to at most two chain
subgraphs, and thus the chain cover consists of at most $2n_U$ such
row intervals in total.  It is straightforward to extend
Algorithm~\ref{alg:unweighted-3} to compute the sets~$U_w$ and the quantities
$B^i_w,E^i_w$,
 and thus the cover can be constructed in $O(n)$
time in compressed form.

\begin{theorem}
Given a compact representation of a convex bipartite graph, a compact representation of a minimum chain cover can be computed in~$O(n)$ time.
\end{theorem}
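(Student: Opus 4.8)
The plan is to assemble the facts already established into a verification that $Z_1,\ldots,Z_{W^*}$ is a minimum chain cover, and then to bound the running time.

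First I would confirm that each $Z_w$, defined as the graph whose row intervals are the extended intervals $[\hat B^i_w,E^i_w]$ for $i\in U_w$, is a subgraph of $G$: the preceding lemma gives $\hat B^i_w\ge L^i$, and $E^i_w\le R^i$ is immediate, so $[\hat B^i_w,E^i_w]\subseteq[L^i,R^i]$ for every $i\in U_w$. Next, the $Z_w$ cover $E$: every edge $(i,j)\in E$ carries a well-defined value $W^i_j=w\in\{1,\ldots,W^*\}$, so $j\in[B^i_w,E^i_w]\subseteq[\hat B^i_w,E^i_w]$ and hence $(i,j)$ is an edge of $Z_w$; moreover, by Lemma~\ref{thm:AtMostTwo} the entries of row $i$ use at most the two consecutive values $w,w+1$, so each row appears in at most two of the sets $U_w$. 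That each $Z_w$ is a chain graph is exactly the property argued above from Lemma~\ref{lem:chain-characterization} together with~\eqref{eq:chainGoal-hat}. Therefore $Z_1,\ldots,Z_{W^*}$ is a chain cover using $W^*$ chain subgraphs; since $W^*$ is the cardinality of a maximum induced matching of $G$, the duality of~\cite{DBLP:journals/tcs/YuCM98} forbids any chain cover with fewer subgraphs, so the cover is minimum.

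For the running time I would proceed in two linear-time stages. First, run Algorithm~\ref{alg:unweighted-3}, augmented (as already noted to be straightforward) to record for every row $i$ its leftmost value $w$ and its threshold position $t_w$; by Lemmas~\ref{thm:XPlusOne} and~\ref{thm:AtMostTwo} these two numbers determine all entries $W^i_j$ of the row, hence the sets $U_w$ and the endpoints $B^i_w,E^i_w$, in $O(n)$ time and yielding $O(n_U)$ triples $(B^i_w,E^i_w,w)$ in total. Second, run Algorithm~\ref{alg:chain}, which evaluates~\eqref{eq:chainGoal} by bucket-sorting these triples into $n_V$ buckets keyed by $E^i_w$ and sweeping the endpoints in nondecreasing order while maintaining, per relevant value $w$, the running minimum of the $B^{i'}_w$ over rows whose endpoint is already strictly passed; this costs $O(n_V+n_U)=O(n)$. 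The output is the collection of at most $2n_U$ intervals $[\hat B^i_w,E^i_w]$ grouped by $w$, that is, a compact representation of each $Z_w$ under the original ordering of $V$; the total time is $O(n)$.

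The main obstacle is less any single computation than checking that the left-extension in~\eqref{eq:chainGoal} is calibrated exactly right: it must be generous enough that the neighborhoods inside $Z_w$ form a chain --- guaranteed by~\eqref{eq:chainGoal-hat}, since any row of $U_w$ with a strictly smaller endpoint has its extended start at most $\hat B^i_w$ --- yet restrained enough that $Z_w$ remains a subgraph of $G$, guaranteed by $\hat B^i_w\ge L^i$. I would therefore concentrate the care on those two bounds, together with the observation that extending intervals only leftward cannot destroy the covering property; once these are pinned down, the duality argument and the bucket-sort bookkeeping are routine.
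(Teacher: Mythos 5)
Your proposal is correct and follows essentially the same route as the paper: subgraph containment via $\hat B^i_w\ge L^i$, coverage because every edge has a value $W^i_j$, the chain property from Lemma~\ref{lem:chain-characterization} with~\eqref{eq:chainGoal-hat}, minimality by duality with the maximum induced matching, and the $O(n)$ bound by augmenting Algorithm~\ref{alg:unweighted-3} and bucket-sorting the triples by $E^i_w$ for Algorithm~\ref{alg:chain}. One small wording slip in your closing paragraph: for $E^{i'}_w<E^i_w$, equation~\eqref{eq:chainGoal-hat} gives $\hat B^i_w\le \hat B^{i'}_w$, i.e.\ the row with the smaller endpoint has its extended start at \emph{least} $\hat B^i_w$ (which is what nesting requires), not at most.
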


Given a compact representation of a minimum chain cover, we can list all the edges of its chain subgraphs in~$O(n+m)$ time since every edge is contained in at most two chain subgraphs. As mentioned in the introduction, a compact representation of a convex bipartite graph can be computed in~$O(n+m)$ time~\cite{DBLP:journals/algorithmica/SoaresS09,STEINER199691,DBLP:journals/jcss/BoothL76}. Thus, Algorithm~\ref{alg:unweighted-3} and Algorithm~\ref{alg:chain} can also be used to obtain:

\begin{theorem}
A minimum chain cover of a convex bipartite graph can be computed in~$O(n+m)$ time.
\end{theorem}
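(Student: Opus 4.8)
The plan is to combine the preceding results into a single pipeline. By the cited fact that a convex bipartite graph with vertex numbering and a compact representation can be produced from an arbitrary bipartite input in $O(n+m)$ time~\cite{DBLP:journals/jcss/BoothL76,DBLP:journals/algorithmica/SoaresS09,STEINER199691}, I would first invoke that preprocessing step to obtain the pairs $(L^i,R^i)$ for all $i\in U$. This costs $O(n+m)$ and feeds exactly the input assumed by the $O(n)$-time algorithms of the previous section.

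Next I would run Algorithm~\ref{alg:unweighted-3} together with the extension that records the sets $U_w$ and the quantities $B^i_w, E^i_w$, followed by Algorithm~\ref{alg:chain}; by the immediately preceding theorem this produces a compact representation of a minimum chain cover in $O(n)$ time. The total so far is $O(n+m)+O(n)=O(n+m)$. Finally I would expand the compact representation into an explicit edge list: since each row interval $[\hat B^i_w,E^i_w]$ of the cover is a subinterval of $[L^i,R^i]$, and since each row $i$ belongs to at most two chain subgraphs (as noted after Algorithm~\ref{alg:chain}), the cover contains at most $2n_U$ row intervals whose edges in aggregate number at most $2m$. Listing them takes $O(n+m)$ time. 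Summing the three phases gives the claimed $O(n+m)$ bound.

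There is essentially no new mathematical content here — correctness of the produced cover is already established by Lemma~\ref{lem:chain-characterization} and the duality of~\cite{DBLP:journals/tcs/YuCM98}, and the running times of the constituent steps are established in the cited theorems. The only point that needs a word of care is that the chain graphs $Z_w$ are convex with the \emph{same} ordering of $V$ (they are unions of the subintervals $[\hat B^i_w,E^i_w]$), so the same compact-representation format applies to them and the edge-listing phase is a straightforward sweep; this is the mild obstacle, and it is handled by the observation that $\hat B^i_w\ge L^i$ together with $E^i_w\le R^i$, so no $Z_w$ escapes $G$. Hence the minimum chain cover is computed in $O(n+m)$ time, as claimed.
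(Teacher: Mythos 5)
Your proposal is correct and follows essentially the same route as the paper: compute a compact representation in $O(n+m)$ time via the cited preprocessing, run Algorithm~\ref{alg:unweighted-3} and Algorithm~\ref{alg:chain} to get a compact cover in $O(n)$ time, and expand it to explicit edge lists in $O(n+m)$ time using the fact that each row (hence each edge) lies in at most two chain subgraphs. Nothing further is needed.
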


\section{Certification of Optimality}
\label{certificate}
An induced matching $H$ together with a chain cover
of the same cardinality provides a \emph{certificate of optimality}, of size
$O(n)$.  As we will establish in the following discussion, it is easy
to \emph{check} this certificate for validity in linear time. This is easier than
\emph{constructing} the largest induced matching with our
algorithm. Thus, it is possible to establish correctness of the result
beyond doubt, for each particular instance of the problem, without
having to trust the correctness of our algorithms and their
implementations, see \cite{CertifyingAlgorithms} for a survey about
this concept.

\begin{algorithm}[tb]
\tcp{$U_w := \{\,i\in U \mid \text{row $i$ contains an entry $w$}\,\}$}

\tcp{Let $B^i_w$ and $E^i_w$ 
such that in row $i$, the entries with $W^i_j=w$ are those with $B^i_w \le j \le E^i_w$
}

Set $G_1:= G_2 := \cdots := G_{W^*} := n_V+1$ \tcp{The value $n_V+1$ acts like $\infty$}
  \For {$r := 1\ \KwTo\ n_v$}{
\tcp{We maintain the quantities $G_w\equiv\min \lbrace B^i_w\mid
  E^i_w<r\rbrace$ for $w=1,\ldots,W^*$.}
  \ForAll {$(B^i_w,E^i_w,w)$ 
with $E^i_w=r$}{
    $\hat B^i_w := \min\lbrace B^i_w,G_w\rbrace$
  }
  \ForAll(\tcp*[h]{update $G_w$ for the increment of $r$})
{$(B^i_w,E^i_w,w)$ 
 with $E^i_w=r$}{  
  $G_w:=\min\lbrace B^i_w,G_w\rbrace$
    }
  }
\caption{Constructing a chain graph 
$
\{\,(i,j)\mid i\in U_w,\, \hat B^i_w \le j \le E^i_w\,\}$,
$1\le w\le W^*$
}
\label{alg:chain}
\end{algorithm}

It is trivial to check whether the matching $H$ is
contained in the graph.
To test whether it forms an induced matching, we sort the edges
$(i,j)$ by $j$. This takes $O(n)$ time with bucket-sort.
Then, by Lemma~\ref{thm:InducedTransitiv}, it is sufficient to test
consecutive edges for independence, and each such test takes only
constant time according to
 Observation~\ref{thm:InducedChar}.

 To establish the validity of a chain cover
 $\{Z_1,\ldots,Z_{W^*}\}$,
 we need to check that the edges of~$G$ are covered
 and each
$Z_w$
is a chain subgraph.
The chain subgraphs
$Z_w=\{\,(i,j)\mid i\in U_w,\, \hat B^i_w \le j \le E^i_w\,\}$, for
$1\le w\le W^*$ are compactly represented by a set of at most $2n_U$
quadruples $(w,i,\hat B^i_w,E^i_w)$.  The following checking procedure works in
linear time for any chain cover as long as it consists of convex
bipartite subgraphs. It does not use any special properties of the
cover produced by our algorithm.

We sort the quadruples $(w,\hat B^i_w,-E^i_w,i)$
lexicographically. Then it is easy to check the chain graph property
using the characterization of Lemma~\ref{lem:chain-characterization}:
The intervals $[\hat B^i_w,E^i_w]$ that belong to a fixed chain graph
$Z_w$ (these are consecutive in the list) ought to be nested.
Since the starting points $\hat B^i_w$ are weakly increasing,
this
amounts to checking that the endpoints $E^i_w$ decrease weakly.

To check that the chain graphs are contained in $G$ and they collectively
cover~$G$, we sort the quadruples
$(i,\hat B^i_w,E^i_w,w)$.
The union of the intervals $[\hat B^i_w,E^i_w]$ that are the neighbors of
a fixed vertex $i\in U$
(these are consecutive in the list)
can be incrementally formed, and the resulting interval is compared against $[L^i,R^i]$. As soon
as a gap would form in this union, we can abort the test
, since the intervals are sorted by left endpoint and it
is then impossible to form a connected interval $[L^i,R^i]$.

The required lexicographic sorting operations can be carried out in $O(n)$ time
by bucket-sort.

\section{Outlook: Duality}

The existence of a pair of maximum induced
matchings and smallest chain covers with the same size
is a manifestation of strong duality between independents sets and
clique covers in perfect graphs.
We mentioned in the introduction that
our maximum induced matching problem 
is an instance of a maximum independent set problem in
the square of a line graph,
and the chain cover is a covering by cliques.
Yu, Chen and Ma~\cite{DBLP:journals/tcs/YuCM98} established
that the square of the line graph of a convex bipartite graph is a co-comparability graph.
Therefore, it is also a perfect graph.
It follows that the linear
program for maximizing the size of an induced matching is
totally dual integral.
As a corollary of this fact, we recover our strong duality result:
the existence of a \emph{primal} optimal solution (maximum induced
matching) and a \emph{dual} optimal solution (smallest chain cover) with
 matching objective function values.

This duality relation for perfect graphs extends to the weighted
version.
 Thus, there
should also be a weighted chain cover with the same weight as the
maximum weight of an induced matching.
It would be interesting to extend our
 primal
Algorithm~\ref{alg:weighted} in weighted graphs to a fast
combinatorial algorithm for finding
minimum-weight chain covers, as
Algorithm~\ref{alg:chain} does for the unweighted version.


{\small 
   \bibliographystyle{lncs-01-doi}
   \bibliography{abbrv,bib}
}
\end{document}